\def\BibTeX{{\rm B\kern-.05em{\sc i\kern-.025em b}\kern-.08em
    T\kern-.1667em\lower.7ex\hbox{E}\kern-.125emX}}
\newtheorem{theorem}{Theorem}
\newcommand{\code}[1]{\textsf{\small\mdseries #1}}
\newcommand{\anon}[2]{#1}
\newcommand{\ds}{\anon{{\sc DataSpread}\xspace}{{\sc Data\-Spread}\xspace}}
\newcounter{ExpCount}
\tikzstyle{cellnode}=[rectangle,inner sep=0pt,draw=none,align=left]
\tikzstyle{c_size_a}=[text width=4.5em,minimum width=5.0em,text height=0.9em,text depth=0.25em]
\tikzstyle{c_size_b}=[inner xsep=0.25em,text height=0.8em,text depth=0.30em,anchor=west,fill=blue!20]
\tikzstyle{c_size_c}=[inner xsep=0.25em,text height=0.8em,text depth=0.30em,anchor=west,fill=red!20]
\tikzstyle{labelnode}=[rectangle,inner sep=0pt,minimum size=0pt,fill=none,draw=none]
\tikzstyle{depgraphnode}=[rounded rectangle,draw,inner sep=2.5pt,minimum size=0pt,minimum width=15pt,minimum height=8pt,fill=gray!20]
\tikzstyle{depgraphrangenode}=[depgraphnode,fill=red!20]
\tikzstyle{depgraphedge}=[->,thick]
\tikzstyle{depgraphbluntedge}=[-,thick]
\tikzstyle{depgraphinhedge}=[-Latex,red,thick,dashed]
\tikzstyle{op_label}=[below=-0.15em]
\tikzstyle{op_a_bar}=[Latex-Latex,very thick]
\newcommand{\squishenum}{
	\begin{enumerate}
		{ \setlength{\itemsep}{-50pt}
			\setlength{\parsep}{0pt}
			\setlength{\topsep}{0pt}
			\setlength{\partopsep}{0pt}
		}
	}
	\newcommand{\squishenumend}{\end{enumerate}}
\newcommand{\stitle}[1]{\vspace{0.2em}\noindent\textbf{#1}}
\newcommand{\squishlist}{
   \begin{list}{$\bullet$}
    { \setlength{\itemsep}{0pt}
      \setlength{\parsep}{2pt}
      \setlength{\topsep}{2pt}
      \setlength{\partopsep}{0pt}
      \setlength{\leftmargin}{12pt}
    }
}
\newcommand{\squishend}{\end{list}}
\newcommand{\cut}[1]{}
\newcommand{\tr}[1]{}
\definecolor{mypurple}{RGB}{119, 69, 198}
\definecolor{newc}{RGB}{70, 180, 80}
\newcommand{\rr}{\textsf{RR}\xspace}
\newcommand{\rf}{\textsf{RF}\xspace}
\newcommand{\fr}{\textsf{FR}\xspace}
\newcommand{\ff}{\textsf{FF}\xspace}
\newcommand{\rrChain}{\textsf{RR-Chain}\xspace}
\newcommand{\rrGap}{\textsf{RR-Gap}\xspace}
\newcommand{\rrGapOne}{\textsf{RR-GapOne}\xspace}
\newcommand{\ncPattern}{\textsf{Single}\xspace}
\newcommand{\naParam}{\textsf{NA}\xspace}
\newcommand{\sys}{TACO\xspace}
\newcommand{\inrow}{TACO-InRow\xspace}
\newcommand{\full}{TACO-Full\xspace}
\newcommand{\anti}{Antifreeze\xspace}
\newcommand{\calc}{NoComp-Calc\xspace}
\newcommand{\nocomp}{NoComp\xspace}
\newcommand{\redis}{RedisGraph\xspace}
\newcommand{\frel}{\textit{rel}\xspace}
\newcommand{\procname}[1]{\texttt{#1}\xspace}
\newcommand{\fcompressname}{addDep}
\newcommand{\fcompress}{\procname{\fcompressname}\xspace}
\newcommand{\ffinddepupdate}{\procname{findDep}\xspace}
\newcommand{\ffindprecupdate}{\procname{findPrec}\xspace}
\newcommand{\fclear}{\procname{removeDep}\xspace}
\newcommand{\varname}[1]{\textit{#1}\xspace}
\newcommand{\pprec}{\varname{prec}\xspace}
\newcommand{\pdep}{\varname{dep}\xspace}
\newcommand{\pmeta}{\varname{meta}\xspace}
\newcommand{\phead}{\varname{head}\xspace}
\newcommand{\ptail}{\varname{tail}\xspace}
\newcommand{\hrel}{\varname{hRel}\xspace}
\newcommand{\hfix}{\varname{hFix}\xspace}
\newcommand{\trel}{\varname{tRel}\xspace}
\newcommand{\tfix}{\varname{tFix}\xspace}
\newcommand{\maxDep}{Maximum Dependents\xspace}
\newcommand{\longPath}{Longest Path\xspace}
\newcommand{\precU}{precToVisit\xspace}
\newcommand{\sprec}{precedent\xspace}
\newcommand{\sprecs}{precedents\xspace}
\newcommand{\sdep}{dependent\xspace}
\newcommand{\sdeps}{dependents\xspace}
\newcommand{\lprec}{precedent\xspace}
\newcommand{\lprecs}{precedents\xspace}
\newcommand{\ldep}{dependent\xspace}
\newcommand{\ldeps}{dependents\xspace}
\newcommand{\pFigure}{Fig.}
\newcommand{\pSection}{Sec.}
\definecolor{darkgreen}{rgb}{0.0, 0.2, 0.13}
\definecolor{auburn}{rgb}{0.43, 0.21, 0.1}
\definecolor{antiquefuchsia}{rgb}{0.57, 0.36, 0.51}
\definecolor{armygreen}{rgb}{0.29, 0.33, 0.13}
\definecolor{ao}{rgb}{0.0, 0.5, 0.0}
\definecolor{purple}{rgb}{0.75, 0.0, 1.0}
\definecolor{yellow}{rgb}{0.99, 0.76, 0.0}
\newif\iftr
\newcommand{\ptr}[1]{#1}
\newcommand{\ppaper}[1]{}
\newcommand{\ptr}[1]{}
\newcommand{\ppaper}[1]{#1}
\newenvironment{reviewone}{\par}{\par}
\newenvironment{reviewtwo}{\par}{\par}
\newenvironment{reviewthree}{\par}{\par}
\newenvironment{reviewmulti}{\par}{\par}
\newcommand{\rone}[1]{#1}
\newcommand{\rtwo}[1]{#1}
\newcommand{\rthree}[1]{#1}
\newenvironment{reviewone}{\par\color{blue}}{\par}
\newcommand{\rone}[1]{\textcolor{blue}{#1}}
\newcommand{\rtwo}[1]{\textcolor{blue}{#1}}
\newcommand{\rthree}[1]{\textcolor{blue}{#1}}
\newcommand{\comment}[1]{\ignorespaces}
\newenvironment{tightitemize}{\begin{list}{$\bullet$}{\setlength{\rightmargin}{0em}\setlength{\leftmargin}{1em}\setlength{\topsep}{0in}\setlength{\itemsep}{0cm}\setlength{\itemindent}{1em}}}{\end{list}}
\newcommand{\TacoAuthors}{\small Dixin Tang, Fanchao Chen$^1$\textsuperscript{\textsection}, Christopher De Leon, Tana Wattanawaroon$^2$, Jeaseok Yun, Srinivasan Seshadri, Aditya G. Parameswaran}
\begin{document}


\title{Efficient and Compact Spreadsheet Formula Graphs
\vspace{-5mm}}

\author{
\IEEEauthorblockN{\TacoAuthors}
\IEEEauthorblockA{\textit{UC Berkeley $|$ Fudan University$^1$ $|$ UIUC$^2$}\\
\{totemtang, chrisdeleon333, jonathanyun, srinivasan.seshadri, adityagp\}@berkeley.edu, \\ chenfc18@fudan.edu.cn, wattana2@illinois.edu}
\vspace{-12mm}
}

\maketitle
\thispagestyle{plain}
\pagestyle{plain}

\begingroup\renewcommand\thefootnote{\textsection}
\footnotetext{Work done at UC Berkeley.}
\endgroup


\begin{abstract}
Spreadsheets are one of the most popular data analysis tools, 
wherein users can express computation 
as formulae alongside data.
The ensuing dependencies are tracked as formula graphs.
Efficiently querying and maintaining these formula graphs 
is critical for interactivity across multiple settings. 
Unfortunately, formula graphs are 
often large and complex 
such that querying and maintaining them is time-consuming, 
reducing interactivity.
We propose \sys, a framework for efficiently 
compressing formula graphs, thereby 
reducing the time for querying and maintenance. 
The efficiency of \sys stems from a key spreadsheet property: 
tabular locality, which means that cells close to each other are likely 
to have similar formula structures. 
We leverage four such tabular locality-based patterns, 
and develop algorithms for compressing formula graphs 
using these patterns, 
directly querying the compressed graph without decompression, 
and incrementally maintaining the graph during updates. 
We integrate \sys into an open-source 
spreadsheet system and show that 
\sys can significantly reduce formula graph sizes. 
For querying formula graphs, 
the speedups of \sys over a baseline implemented in our 
framework and a commercial 
spreadsheet system are up to 34,972$\times$ and 632$\times$, 
respectively.
\end{abstract}


\section{Introduction}
\label{sec:introduction}

%

Spreadsheets are widely used for data analysis, 
with a user-base of nearly 1 
billion~\cite{excel-users, users-one-billion}. 
They support a variety of applications, 
from planning and inventory tracking,  
to complex financial, medical, and scientific data analysis. 
Their popularity is attributable to 
an intuitive tabular layout  
and in-situ formula computation~\cite{nardi1990spreadsheet}. 
Users directly analyze their data 
by writing embedded formulae alongside data, 
akin to database views~\cite{Gupta93IVM}.
These formulae take the results of 
other formulae or raw data values as input, 
creating dependencies between the output of formulae and 
their inputs. 
These dependencies are internally 
represented 
as a {\em formula graph}. 
\rone{Querying formula graphs 
is critical to the interactivity of spreadsheets for 
multiple applications, including:}

\begin{reviewone}
\begin{tightitemize}
\item \textbf{Formula recalculation:} 
When a cell is modified, 
the spreadsheet system needs to query the formula graph 
to find its dependents and calculate new results
~\cite{excel-dep-graph, calc-dep-graph}. 
In addition, the performance of identifying dependents 
is critical for returning control to users 
interactively for an asynchronous execution model~\cite{Antifreeze}. 
Here, the spreadsheet system marks 
the formulae to be recalculated as invisible 
and returns control to the user such that 
the user can interact with the visible cells. 
Therefore, finding dependents is on the critical path 
for returning control to the user and its performance is 
important in ensuring interactivity. 

\item \textbf{Formula dependency visualization:} 
Spreadsheet systems, such as Excel and LibreCalc, 
provide tools for finding and visualizing the dependents 
and precedents of cells to help users check the accuracy of formulae 
and identify sources of errors~\cite{excel-trace, calc-trace, spreadsheetHorrorStory, spreadsheetError, excel-audit}. 
For these applications, the performance of 
finding dependents/precedents 
is also critical to maintain interactivity.
\end{tightitemize}
\end{reviewone}
\noindent Unfortunately, real-world spreadsheets often include 
large and complex formula graphs, 
where a cell update can have a large number of dependent cells. 
Traversing these graphs to find dependents 
can be time-consuming and lead to high response times. 
We analyze two real-world spreadsheet datasets, 
Enron~\cite{Enron} and Github (a dataset we crawled; 
details in \pSection~\ref{sec:benchmark_config}). 
We compute, for each spreadsheet, 
the maximum number of \sdeps for a given cell, 
as well as the longest path in the formula graph. 
\rone{We plot the probability distributions for these two quantities 
for the two datasets in \pFigure~\ref{fig:num_dependents}.}
We see that the number of dependents of a single cell 
can be as high as 300K, 
while a path can be as long as 200K edges. 
\rone{Therefore, finding dependents and precedents 
in real spreadsheets may take a long response time, 
which in turns hinders data exploration. 
In fact, our experiments in Sec.~\ref{sec:exp} show that 
using a baseline approach to find dependents in formula graphs 
can take up to 49 seconds for real spreadsheets. 
A previous study has shown that even an additional delay 
of 0.5 seconds ``results in reduced interaction and reduced 
dataset coverage during analysis''~\cite{Latency_LiuH14}.}


\begin{figure}[!t]
    \centering
    \includegraphics[height=26mm]{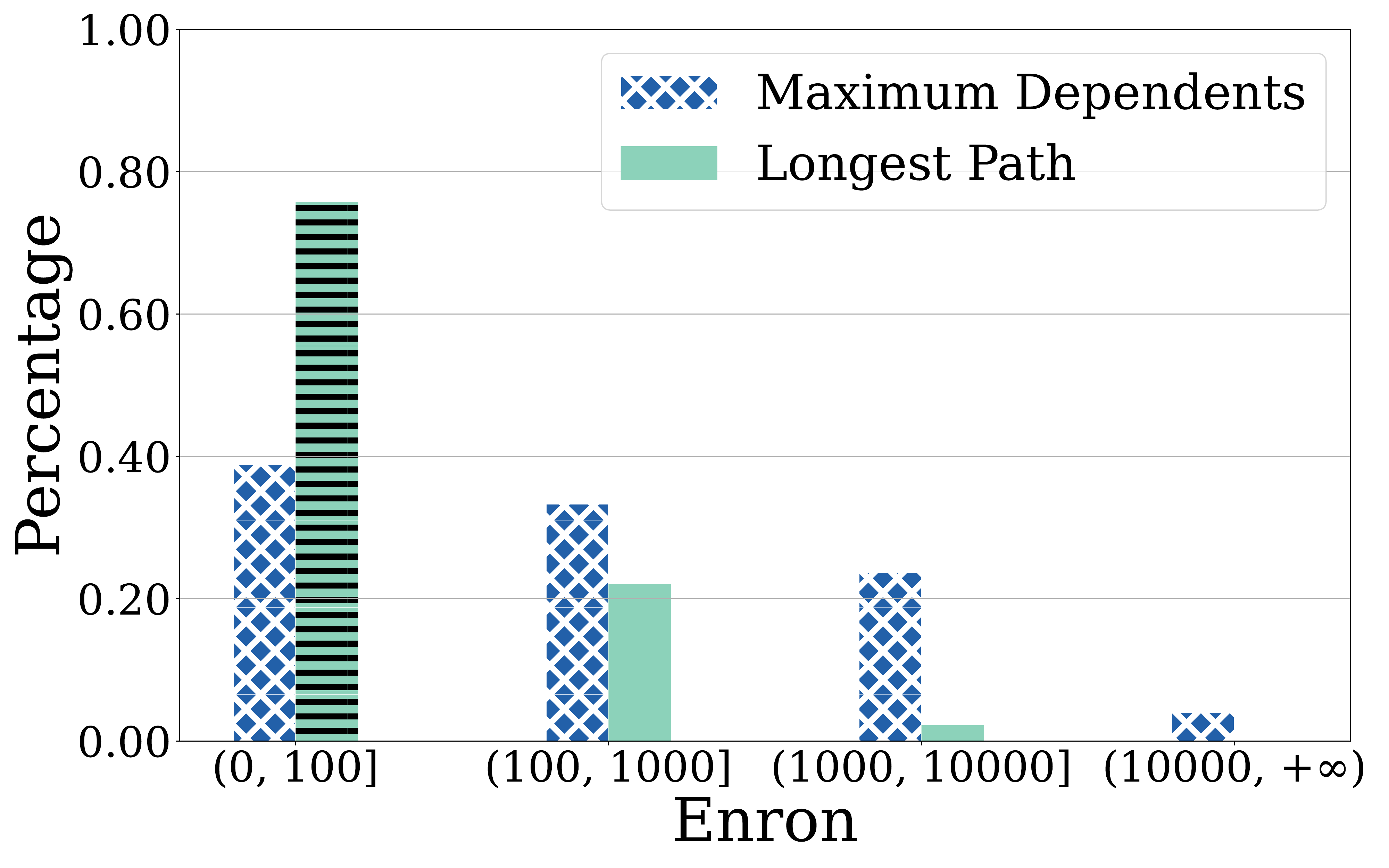}
    \includegraphics[height=26mm]{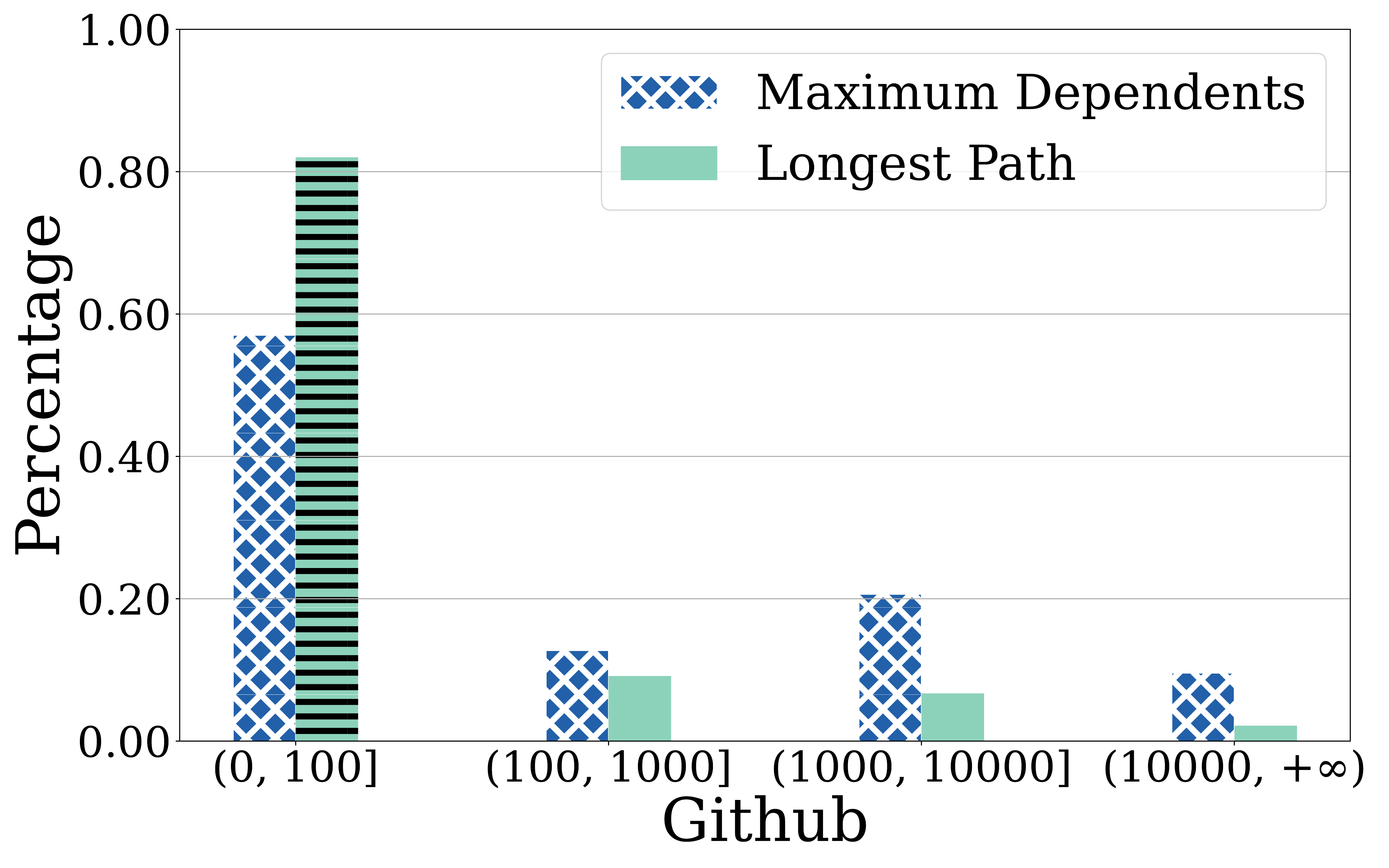}
    \vspace{-3mm}
    \caption{\small \rone{Probability distributions for 
    the maxinum number of \sdeps and the longest path 
    in the Enron and Github datasets}}
    \label{fig:num_dependents}
    \vspace{-3mm}
\end{figure}

\begin{figure}[!t]
    \centering
    \includegraphics[height=24mm]{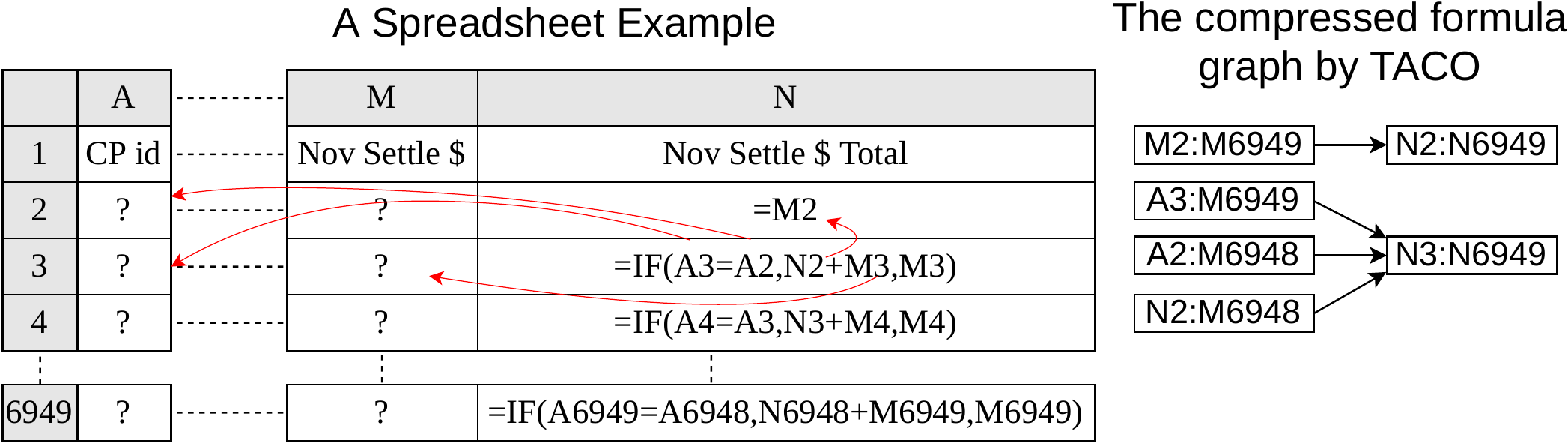}
    \vspace{-2mm}
    \caption{\small A real spreadsheet with tabular locality~\cite{Enron}}
    \label{fig:locality_example}
    \vspace{-7mm}
\end{figure}

\rone{In this paper, we study how we can reduce the execution time 
for finding dependents and precedents in formula graphs 
by leveraging predefined patterns found in real spreadsheets.}
These patterns can be used to 
compress the formula graph 
and enable fast look-ups of dependents and precedents 
directly on the compressed graph. 
\emph{Specifically, our key insight is that cells 
that are close to each other 
in the tabular spreadsheet 
layout often employ similar formula structures}, 
a property we refer to as \textbf{tabular locality}. 
\pFigure~\ref{fig:locality_example} shows a column of formulae 
of a real-world spreadsheet that follows tabular locality. 
While the formulae in the column \code{N} 
look complicated at first glance, 
they follow the same pattern starting from \code{N3}: 
the \code{IF} formula in each row 
references the cell of the same row and the row above 
from column \code{A} 
(e.g., \code{N3} references \code{A3} and \code{A2}), 
the cell to the left (e.g., \code{M3} for \code{N3}), 
and the cell above (e.g., \code{N2} for \code{N3}). 
Tabular locality is prevalent in real-world spreadsheets mainly 
because users often do not 
write several distinct formulae by hand, 
but use spreadsheet features, 
such as copy-paste and autofill, 
to generate formulae automatically. 
Autofill, for instance, allows users 
to drag a cell to fill adjacent cells 
by repeating the pattern of the source cell. 
Users could also programmatically 
generate a large spreadsheet, 
which still likely respects tabular locality. 


\rone{However, leveraging tabular locality 
to efficiently find dependents and precedents requires 
addressing several challenges in 
compressing, querying, and maintaining formula graphs.}
First, as we show in the experiments (\pSection~\ref{sec:exp_graph_size}), 
formula references are complex: 
a formula could include multiple references 
to different rows or columns 
(e.g., \code{N3} in \pFigure~\ref{fig:locality_example} 
references 4 different cells). 
Disentangling these multiple references across cells, 
identifying common patterns across them 
and compressing these common patterns 
can be time-consuming. 
Therefore, the compression algorithm should balance 
between the quality of the compression 
(e.g., the number of common patterns detected) 
and the compression time. 
While it may be possible to track user actions 
(e.g., during autofill) to identify and compress 
formula dependencies directly, 
this approach does not 
apply when spreadsheets are shared via files 
(e.g., xlsx files), losing track of the actions that generated them. 
It also cannot compress formula dependencies generated programmatically~\cite{excel-vba-dependents, poi}.
Furthermore, developing a compressed representation and 
algorithms for directly querying the compressed graphs 
to reduce look-up time is also an open challenge
Finally, the formula graph needs to be maintained over time, 
which requires incrementally updating the compressed graph 
to avoid decompression overhead.
 


To address these challenges, we present 
\emph{\underline{Ta}bular Locality-based 
\underline{Co}mpression} or \sys. 
In \sys, we leverage four basic patterns that serve as building blocks 
for other more complicated patterns, and identify one extended pattern 
based on the analysis of real spreadsheets. 
We propose a generic framework that 
decomposes messy formulae and extracts their predefined patterns. 
This framework is also extensible to support new patterns.
We prove that compressing a formula graph using predefined patterns 
is {\sc NP-Hard} via a reduction from the 
\emph{rectilinear picture compression} problem~\cite{intractability}. 
We develop a greedy algorithm to 
efficiently compress formula graphs 
while maintaining low compression overhead.
Further, we design algorithms for 
finding \ldeps or \lprecs directly on the compressed graph, 
and for maintaining the graph incrementally, 
and analyze the complexity of each algorithm.
Our experiments show that for querying formula graphs, 
the speedups of \sys over a baseline  
and a commercial spreadsheet system are 
up to \textbf{34,972$\times$} 
and \textbf{632$\times$}, respectively.

While there is a lot of work on graph compression~\cite{liu2018graph}, 
this work does not leverage tabular locality or take into account 
the spatial nature of spreadsheet ranges. 
In addition, most of this work does not support directly 
querying the compressed graph, so these compression 
algorithms will not be faster than an approach 
without compression in terms of finding \sdeps/\sprecs.
\ptr{Fan et al.~\cite{queryPreserving} propose a method for directly 
executing reachability and pattern matching queries 
on a compressed graph, but do not leverage tabular locality 
nor supporting finding \sdeps/\sprecs.}
\sys is also different from columnar compression~\cite{MonetDB, CStore} because \sys compresses formula dependencies 
as opposed to columnar data. 
A recent paper proposes a specialized algorithm 
for compressing formula graphs~\cite{Antifreeze}.
However, this algorithm introduces false positives.
In addition, it has a high compression and maintenance time, 
which we will show in \pSection~\ref{sec:exp}. 
It turns out that Excel has a capability 
wherein it identifies identical formulae and 
stores duplicate formulae as  
pointers to the first formula~\cite{excel_same_formula}. 
But it does not consider compressing and querying 
the formula dependencies based on tabular locality.



\section{Background and Problem}
\label{sec:background}

In this section, we present the problem of compressing, querying, 
and maintaining formula graphs. 

\subsection{Background}
\label{sec:formula_graph}

\stitle{Spreadsheets} 
A spreadsheet consists of a set of \emph{cells} organized in a tabular layout. 
Each cell is referenced using its column and row index. 
Columns are identified by letters \code{A,} $\cdots$ \code{, Z, AA,} $\cdots$, 
and rows are identified by numbers \code{1, 2,} $\cdots$. 
For simplicity, for a cell we also use integers $(i, j)$ 
to represent its position, 
where $i$ and $j$ are column and row indices,
respectively, both starting from 1. 
A \emph{range}, akin to a 2D window, 
is a rectangular region of cells, 
identified by the top-left (called \emph{head}) and 
bottom-right (called \emph{tail}) cells. 
For example, the range \code{A1:B2} contains cells \code{A1, A2, B1, B2},
with 
head and tail cells \code{A1} and \code{B2}, respectively. 

A cell contains a \emph{formula} or a \emph{pure value}. 
A pure value is a constant belonging to a fixed type 
while a formula is a mathematical expression that 
takes pure values and/or cell/range references as input. 
The result of an evaluated formula is an \emph{evaluated value}. 
For example, the cells in column \code{A} in \pFigure~\ref{fig:locality_example} 
include pure values while the cells in column \code{N} include formulae. 
In the rest of the paper, we use ``value'' to refer to either 
the pure or evaluated value of a cell. 
A cell that includes a formula is called a \emph{formula cell}.

\stitle{Formula graphs} 
A formula graph is a directed acyclic graph (DAG) 
that stores the dependencies of each formula 
referencing other ranges as edges. 
Specifically, each formula is parsed to 
get the set of ranges the formula references, 
with a directed edge added from each referenced range to 
the formula cell. 
We call this directed edge a \emph{dependency}. 
Given a directed edge $e = (\pprec, \pdep)$, 
we call $\pprec$ the \emph{direct precedent} of $\pdep$ 
or the \emph{precedent} of the edge $e$. 
Symmetrically, we call $\pdep$ the \emph{direct dependent} 
of $\pprec$ or the \emph{dependent} of $e$. 

\pFigure~\ref{fig:graph_example} shows a spreadsheet 
with four formulae along with its formula graph. 
The cells denoted as $?$ are pure values. 
\code{B1} and \code{B2} have the same formula \code{SUM(A1:A3)}, 
so the direct \sdeps of \code{A1:A3} include \code{B1} and \code{B2}.
\code{C1} references \code{B1} and \code{B3}, so we add two edges.
Finally, \code{C2} references \code{B2:B3}, which adds an edge 
with separate vertices although \code{B2:B3} 
overlaps with the vertices \code{B2} and \code{B3}.

The formula graph is used to quickly find the \emph{\sdeps}
or \emph{\sprecs} of an input range. 
The \sdeps of an input range 
are the set of cells that are reachable from 
the input range in the formula graph. 
Symmetrically, the \sprecs are 
the set of cells that can reach the 
input range in this formula graph via a directed path. 
For example, the \sdeps of \code{A1} are \code{\{B1, B2, C1, C2\}} 
in \pFigure~\ref{fig:graph_example}. 
Since a vertex in the graph can be a range, 
we can build an index (e.g., R-Tree~\cite{r-tree})
over the vertices to quickly find the ranges that 
overlap with an input range.
(e.g., find \code{A1:A3} given a cell \code{A1}).

One application of the formula graph 
is to find the \sdeps when users update 
the spreadsheet to ensure that users do not see stale 
or inconsistent results. Specifically, when a cell is updated, 
the formulae of its \sdeps 
will be re-evaluated in sequence to refresh their values. 
A key prerequisite to updating the formulae is 
identifying which formulae require recomputation in the first place. 
If the update is to a formula cell, 
the formula graph will be modified. 
The formula graph is also useful for 
visualizing formula dependencies, 
which allows users to trace the \sdeps/\sprecs of a cell 
to check the accuracy of formulae or 
find the sources of errors~\cite{excel-audit, calc-trace, excel-trace, excel-vba-dependents}. 
In both applications, the performance of querying the formula graph 
is critical to ensure interactivity.


\subsection{Compressing, Querying, and Maintaining Formula Graphs}

\stitle{Formula graph problems} 
Finding the \sdeps and \sprecs of an input range 
is often time-consuming due to the size 
or complexity of the graph. 
Therefore, we propose compressing formula graphs  
by leveraging tabular locality 
to significantly reduce graph size.
Directly querying the compressed graph 
and incremental maintenance 
can decrease the time taken for finding \sdeps/\sprecs 
and maintaining the graph, respectively,  
while introducing the modest overhead of 
building the compressed graph.

\stitle{Patterns in formula graphs} 
We capture and distill tabular locality 
in a formula graph via {\em patterns}. 
For a set of edges $A$ of arbitrary size, 
a pattern is a constant-size (i.e., $O(1)$) representation 
that can reconstruct $A$ (with size $O(|A|)$). 
In addition, finding the direct \sdeps 
or \sprecs of an input range in a pattern should also 
be constant time.
Consider \pFigure~\ref{fig:locality_example} as an example. 
Each formula cell \code{N$i$} starting from \code{N3} follows 
the pattern that \code{N$i$} depends on \code{A$i$}, 
\code{A($i$-1)}, \code{M$i$}, and \code{N($i$-1)}. 
By storing the relative positions between \code{N$i$} and \code{A$i$}, i.e., \code{A$i$} is 13 columns left to \code{N$i$}, 
and the valid range of \code{N$i$}, i.e., \code{N3:N6949}, 
we can represent the edges $\code{A$i$} \rightarrow \code{N$i$}$ 
using constant-size information. 
We can also find \sdeps/\sprecs in this compressed 
edge in constant time. For example, for input range \code{A3:A10}, 
we can use the information of relative positions 
to find the \sdeps \code{N3:N10} in constant time. 
We can represent and query other edges in a similar way. 
Therefore, leveraging patterns greatly reduces formula graph sizes 
and consequently reduces the time for finding \sdeps/\sprecs. 

\begin{figure}[!t]
    \centering
    \includegraphics[height=18mm]{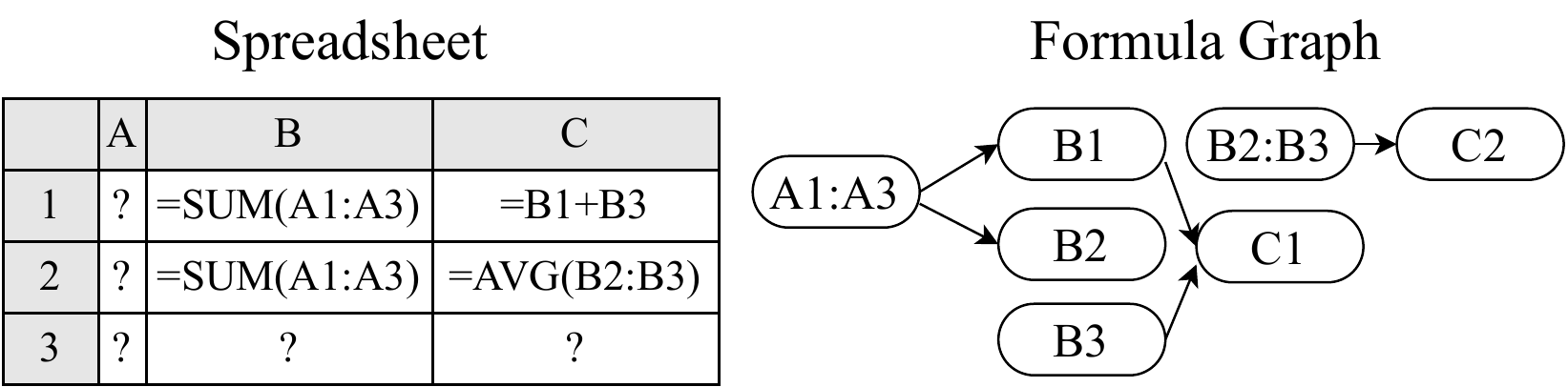}
    \caption{An example spreadsheet and its formula graph}
    \label{fig:graph_example}
    \vspace{-7mm}
\end{figure}

\stitle{Compressed formula graph representation} 
Given a formula graph $G^\prime(E^\prime, V^\prime)$, 
we want to find a \emph{lossless} compressed graph $G(E, V)$ 
that preserves the results of 
finding \sdeps/\sprecs. 
$G$ is generated 
based on a partition of the edge set $E^\prime$ in $G^\prime$: 
$P=\{E_1^\prime, E_2^\prime, \cdots, E_N^\prime\}$, 
where $E^\prime=\cup_{i=1}^{N}E_i^\prime$. 
Each $E_i^\prime$ must either contain a single edge 
(i.e., uncompressed), or be a set of edges that
follow one of the predefined compression patterns, in which case
these edges will be replaced with a compressed edge. 
We generate a compressed edge $e_i=(\pprec, \pdep, p, \pmeta)$ 
for each $E_i^\prime = \{e_1^i, e_2^i, \cdots, e_{M_i}^i\}$, 
where $M_i$ is the size of $E_i^\prime$. 
The components $e_i.\pprec$ and $e_i.\pdep$ are the precedent 
and the dependent of $e_i$, computed 
as $e_i.\pprec = \bigoplus_{j=1}^{M_i}{e_j^i.\pprec}$ and 
$e_i.\pdep = \bigoplus_{j=1}^{M_i}{e_j^i.\pdep}$. 
$\bigoplus$ is the minimal bounding range 
of the input ranges. 
For example, $\bigoplus$ merges the ranges \code{A1:A3} 
and \code{A2:A5} into \code{A1:A5}.
The component $e_i.p$ is the compression pattern 
while $e_i.\pmeta$ encodes the underlying pattern 
information such that $\pmeta$ of an edge $e_i$ in $E$ 
is used to reconstruct the corresponding edges in $E_i^\prime$. 
\rtwo{If $E_i^\prime$ contains a single edge,  $e_i.p$ is set to 
\ncPattern, which is defined as the pattern for an uncompressed edge.}
So $E$ in $G$ comprises the edges generated 
from the partition $P$ and $|E| = N$. 
The compressed vertex set $V$ is induced from $E$.

\begin{figure*}[!t]
\vspace{-20pt}
  \begin{tabular}{cc}
      \begin{minipage}[b]{0.47\linewidth}
        \centering
        \includegraphics[width=0.9\linewidth]{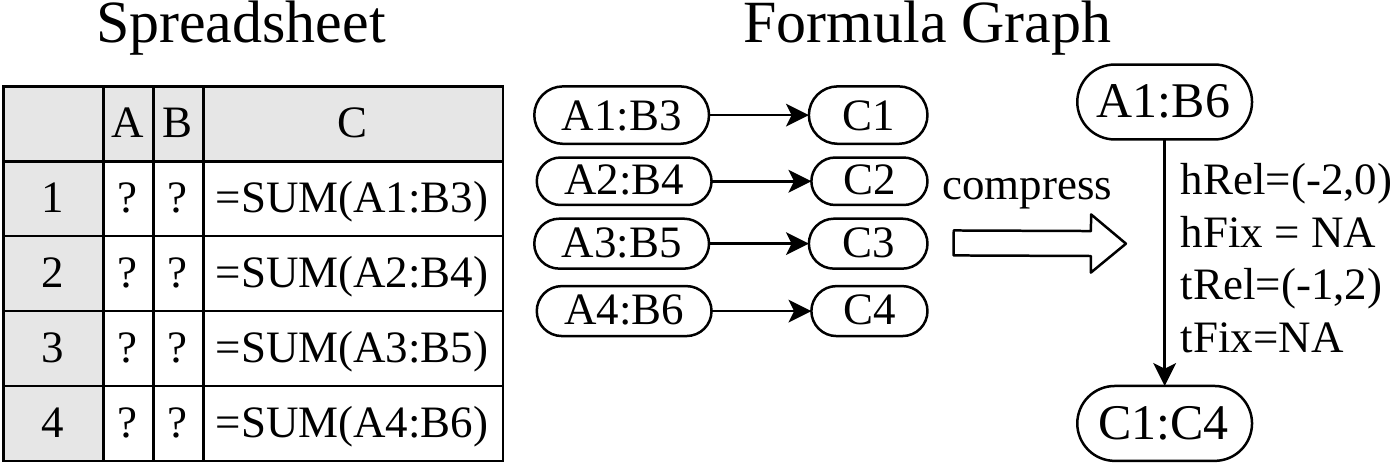}
        \vspace{-2mm}
        \subcaption{Relative plus Relative (i.e., sliding window)}
        \label{fig:rr}
      \end{minipage} 

      \begin{minipage}[b]{0.47\linewidth}
        \centering
        \includegraphics[width=0.9\linewidth]{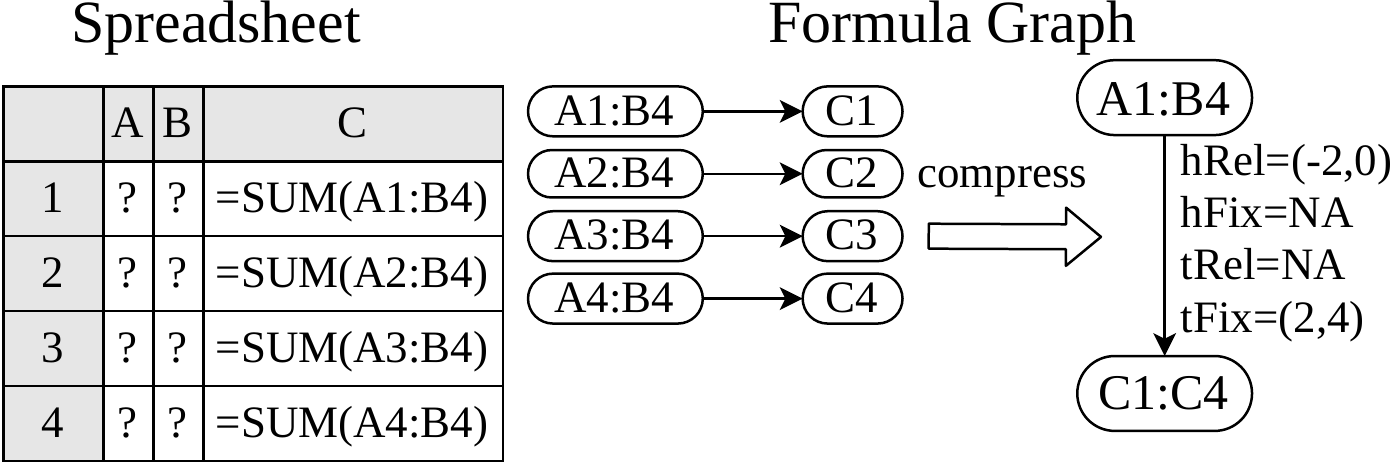}
        \subcaption{Relative plus Fixed (i.e., shrinking window)}
        \label{fig:rf}
      \end{minipage} \\

      \begin{minipage}[b]{0.47\linewidth}
        \centering
        \includegraphics[width=0.85\linewidth]{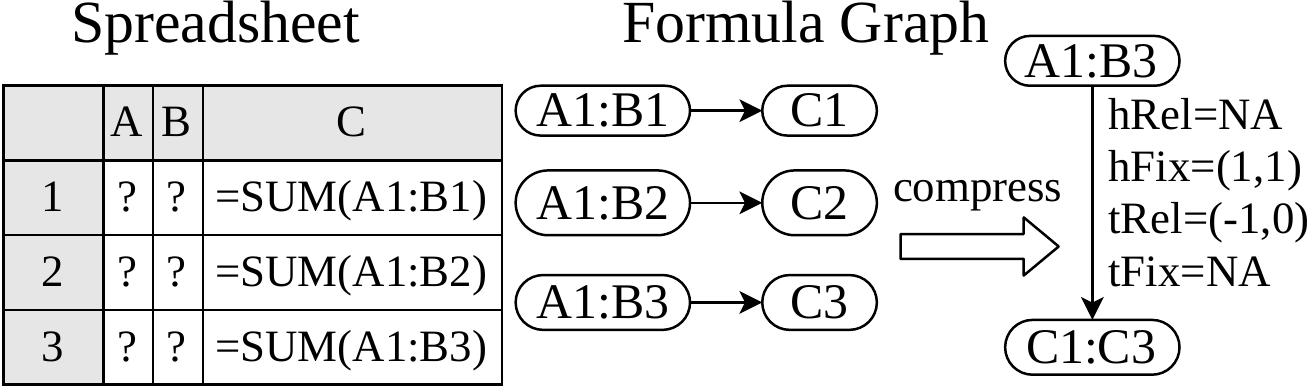}
        \subcaption{Fixed plus Relative (i.e., expanding window)}
        \label{fig:fr}
      \end{minipage} 
      
      \begin{minipage}[b]{0.47\linewidth}
        \centering
        \includegraphics[width=0.85\linewidth]{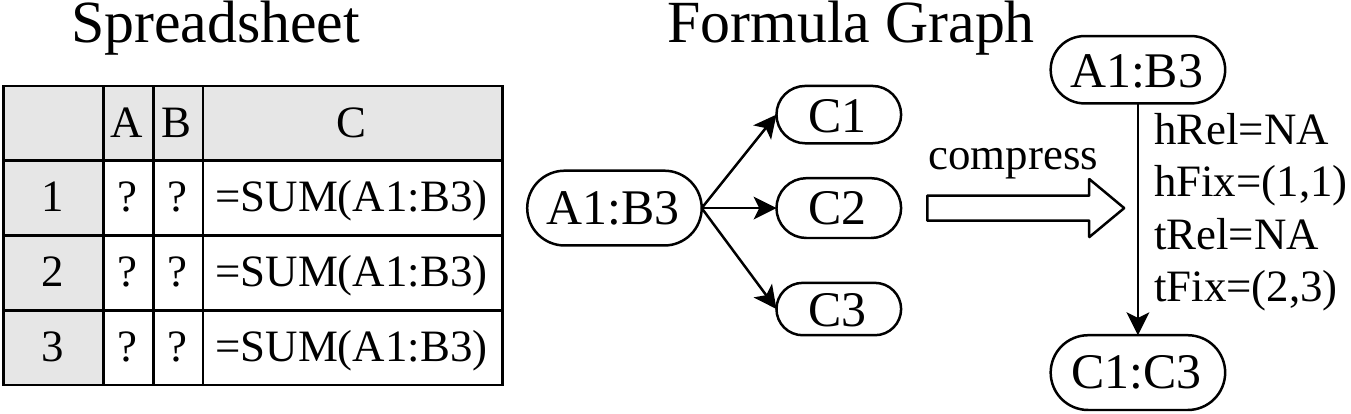}
        \subcaption{Fixed plus Fixed (i.e., fixed window)}
        \label{fig:ff}
      \end{minipage}
  \end{tabular}
  \vspace{-3mm}
  \caption{Examples of four basic patterns of tabular locality}
  \vspace{-7mm}
\end{figure*}

\stitle{Problem statement} 
Given a set of predefined pattern types, we want to 
build an equivalent compressed graph $G(E, V)$ of 
a formula graph $G^\prime(E^\prime, V^\prime)$ 
such that the size of $G$ and the 
time for finding the \sdeps/\sprecs
of a cell are both significantly smaller compared to $G^\prime$, 
and \rone{the time for maintaining $G$ is also 
smaller than the time for maintaining $G^\prime$  
with respect to the same updates}. 

To address this problem, we discuss the 
basic compression patterns we 
support (\pSection~\ref{sec:locality}). 
We then present the \sys framework 
that leverages the basic patterns to compress, query, 
and maintain a formula graph, 
and analyze the algorithmic complexity (\pSection~\ref{sec:sys}). 
Finally, we discuss extending \sys to support new patterns 
beyond the basic ones (\pSection~\ref{sec:extension}).
\section{Basic Patterns}
\label{sec:locality} 

In this section, we define the basic compression 
patterns we consider and propose algorithms for using 
the pattern to build a compressed edge, 
finding \ldeps or \lprecs, and maintaining the edge. 

\subsection{Basic Patterns}
\label{sec:basic_patterns}
The basic patterns consider 
tabular locality in adjacent formula cells 
and assume each formula cell references a single range. 
We can employ the basic patterns multiple 
times to compress dependencies 
when formula cells have multiple references, 
as discussed in \pSection~\ref{sec:compress}. 
We focus on adjacent cells in a column for simplicity; 
the row-wise case can be derived symmetrically. 

Let $A^\prime (A^\prime \subseteq E^\prime $) be a set 
of dependencies in a column of adjacent formula cells. 
Our basic patterns capture various relationships 
between the \lprec and \ldep of 
each $e^\prime \in A^\prime$. 
Recall that $e^\prime.\pdep$ is a formula cell 
and $e^\prime.\pprec$ is a referenced range, represented 
by the positions of its head and tail cell. 
In a spreadsheet, there are two types of relationships 
between the formula cells (i.e., $e^\prime.\pdep$)
and the head/tail cells in the referenced range (i.e., $e^\prime.\pprec$): 
\emph{fixed} and \emph{relative}~\cite{fixed_and_relative}. 

The fixed relationship captures the scenario
where each dependent $e^\prime.\pdep$ 
references the same head or tail cell of 
the precedent $e^\prime.\pprec$. 
For example, the formulae in a column all 
reference a common dollar conversion rate, 
stored in a fixed location, i.e., a cell. 
Here, we have a fixed relationship with both the head and the tail cell of the referenced range, which are identical. 
This is a case of an FF (or Fixed-Fixed) pattern.
The referenced head and tail cell for fixed references
are denoted $\hfix$ and $\tfix$, respectively. 

The relative relationship, on the other hand, 
captures the scenario where each dependent $e^\prime.\pdep$ 
has the same relative position 
with respect to the head or tail cell 
of the precedent $e^\prime.\pprec$. 
The relative position with respect 
to the head and tail cell 
is denoted $\hrel$ and $\trel$, respectively. 
One example of a relative position is 
each formula cell in a column 
references a cell to its left. 
Here there is a relative relationship
to both the head and tail cell of the referenced range,
which are identical. 
This is a case of an RR (or Relative-Relative) pattern.
We use a pair $(p, q)$ to represent the relative position; 
$p$ denotes the relative column distance 
and $q$ the relative row distance. 
Given two cells' positions $u$ and $v$, 
we say $u$ is {\em relative} to $v$ by $(p, q)$ 
if $v.i = u.i + p$ and $v.j = u.j + q$.

Combining the two types of relationships (fixed or relative) with the two cells that represent the \lprec (head and tail), 
there are four basic patterns that capture
the relationships between \ldeps and \lprecs
for a column of dependencies.
We additionally include a 
default pattern called \ncPattern 
for an uncompressed edge.

\stitle{Relative plus Relative (\rr)}
\rr is the setting where each $e^\prime.\pdep$ has the same relative positions to 
both head and tail cells of $e^\prime.\pprec$. 
\pFigure~\ref{fig:rr} shows an example. 
We see that each formula cell in column \code{C} is 
relative to the head cell of its referenced range 
by $(-2, 0)$ (i.e., to the left by two columns) 
and relative to the tail cell by $(-1, 2)$. 
The metadata $\pmeta$ is 
($\hrel=(-2,0), \hfix=\naParam, \trel=(-1,2), \tfix=\naParam)$). 
$\naParam$ means that this pattern does not include this information 
(e.g., \rr does not reference fixed head or tail cells). 
So the compressed edge in \pFigure~\ref{fig:rr} is  
$(\pprec=\code{A1:B6}, \pdep=\code{C1:C4}, p=\rr, \pmeta)$, 
where $\pmeta$ is as defined above.

\stitle{Relative plus Fixed (\rf)}
\rf is the setting where each $e^\prime.\pdep$ has the same relative position 
to the head cell and references a fixed tail cell. 
\pFigure~\ref{fig:rf} shows an example. 
Here, each formula in column \code{C} is relative to the head cell 
of its referenced range 
by $(-2,0)$ and points to a fixed tail cell $\code{B4} = (2,4)$. 
The metadata $\pmeta$ equals $(\hrel=(-2,0), \hfix=\naParam, \trel=\naParam, \tfix=(2,4))$. 

\stitle{Fixed plus Relative (\fr)}
\fr is the dual pattern of \rf. So each $e^\prime.\pdep$ points to the same head cell 
and has the same relative position for the tail cell. 
\pFigure~\ref{fig:fr} shows that 
the metadata of the compressed edge is $(\hrel=\naParam, \hfix=(1,1), \trel=(-1,0), \tfix=\naParam)$.

\stitle{Fixed plus Fixed (\ff)}
\ff is the setting where each $e^\prime.\pdep$ references fixed head and tail cells, 
represented as $\hfix$ and $\tfix$, respectively. 
\pFigure~\ref{fig:ff} shows that each formula cell always 
points to \code{(A1:B3)} and its $\pmeta$ is $(\hrel=\naParam, \hfix=(1,1), \trel=\naParam, \tfix=(2,3))$.

\stitle{\rtwo{Applicability of the  basic patterns}}
\rtwo{One major reason why the basic patterns are prevalent 
in real spreadsheets is that modern spreadsheet systems 
(e.g., Excel, Google Sheets, and LibreOffice Calc)
provide a tool, called autofill, to help users generate 
a large number of formulae automatically, 
and the patterns adopted by autofill end up being the basic patterns. 
Specifically, autofill generates formulae by 
applying the pattern of one source formula cell to adjacent cells. 
For the generated formulae, the formula functions (e.g., \code{SUM}) 
are the same as in the source cell, but the references 
(e.g., \code{A1:B1}) are modified based on the following rules: 
if a reference is prefixed 
with a dollar sign \texttt{\$}, it is a fixed reference; 
otherwise it is a relative reference. 
Therefore, if a range $R$ in the source cell 
does not have \texttt{\$}, 
the generated ranges from $R$ follow the \rr pattern. 
If $R$'s head cell does not have \texttt{\$} but the tail cell does, 
the generated ranges follow \rf. 
The \fr pattern is generated symmetrically. 
Finally, if $R$'s tail and head cells have \texttt{\$}, 
the generated ranges follow \ff. 
But a formula may mix many ranges and include outliers, 
which makes extracting basic patterns from these formulae challenging.}

\rtwo{
While there are other methods for generating formula cells 
(e.g., programmatically), we believe these methods are still likely to 
generate the basic patterns 
since these patterns are the building blocks for many applications. 
A concrete example that follows \rr  
is in \pFigure~\ref{fig:locality_example};
\rr is common in sliding-window-style computation. 
\fr and \rf are often employed for cumulative total computation.
A real example involves a user sorting transactions 
by date and using a column of formulae to compute 
the year-to-date sales amount. \ff is also widely used 
in real applications for referencing fixed ranges 
for point lookups, e.g., a fixed interest rate or 
monetary conversion rate in a cell, or range lookups. 
One example of a \ff range lookup is a column of \code{VLOOKUP} 
formulae, each of which looks up a value in the same range. 
In fact, our experiments in Section~\ref{sec:exp_graph_size} 
show that leveraging the basic patterns 
in Section~\ref{sec:extension} reduce the number of 
edges of formula graphs for two real spreadsheet datasets 
to 5\% and 1.9\%, respectively, which shows that the 
basic patterns are prevalent in real spreadsheets.}

\ptr{
\stitle{Discussion on leveraging the basic patterns} 
One natural question is whether existing spreadsheets leverage 
the basic patterns for compression and querying. 
Excel does identify the same formulae and stores them 
efficiently~\cite{excel_same_formula}, but does not 
leverage these patterns to accelerate traversing formula graphs, 
as is verified by our experiments in \pSection~\ref{sec:exp_commercial}. 
While it is possible to track autofill expressions to compress 
formula dependencies, this approach does not apply to spreadsheets 
generated programatically, and is coupled 
with a spreadsheet system.
To the best of our knowledge, 
no spreadsheet systems compress and query formula dependencies 
via tabular locality.}


Therefore, we develop general compression algorithms 
that apply to all spreadsheets. 
Our algorithms can certainly leverage user actions (e.g., autofill) 
or cues (e.g., dollar sign)
for better compression, but do not rely on it. 
In addition, we also design novel algorithms 
for efficiently querying and incrementally 
maintaining these patterns.

\subsection{Algorithms for the Basic Patterns}
\label{sec:keyfunctions}
To integrate a pattern 
into \sys, \sys requires each pattern to
implement four key functions\ptr{, as shown in \pFigure~\ref{fig:api}}. 

\begin{tightitemize}
\item $\fcompress(e, e^\prime)$: add a dependency $e^\prime$ 
to a compressed edge~$e$, where $e^\prime.\pdep$, the formula cell, is adjacent to $e.\pdep$;

\item $\ffinddepupdate(e, r)$: find the \ldeps of a range of cells $r$ 
within a compressed edge $e$, where $r$ is contained in $e.\pprec$; 

\item $\ffindprecupdate(e, s)$: find the \lprecs of a 
range of cells $s$ 
within a compressed edge $e$, where $s$ is contained in $e.\pdep$;

\item $\fclear(e, s)$: remove the dependencies 
for a range of formula cells $s$ in an edge $e$, 
where $s$ is contained in $e.\pdep$ \end{tightitemize}
The parameter assumptions are guaranteed by \sys framework, discussed in \pSection~\ref{sec:sys}. 

\ptr{
\begin{algorithm}[!t]
  \small
  \SetAlgoLined\DontPrintSemicolon
  \SetArgSty{textrm}
  \SetKw{break}{break}
  \SetKwProg{myalg}{Algorithm}{}{}
  \SetKwProg{myproc}{Procedure}{}{}
  
  \SetKwFunction{compress}{\fcompressname}
  \SetKwFunction{rel}{rel}
  \myalg{\compress{$e$, $e^\prime$}}{
    \uIf{$e.p == \ncPattern$}{
        \uIf{$\frel(e) == \frel(e^\prime)$}{
            \Return $(e.\pprec \bigoplus e^\prime.\pprec, e.\pdep \bigoplus e^\prime.\pdep, \rr, \frel(e^\prime))$\;
        }
    } \ElseIf {$e.\pmeta == \frel(e^\prime)$} {
        \Return $(e.\pprec \bigoplus e^\prime.\pprec, e.\pdep \bigoplus e^\prime.\pdep, \rr, \pmeta)$\;
    }
    \Return NULL\;
  }{}
  \myproc{\rel{$e$}}{
     $hRel \gets e.prec.head - e.dep$\;
     $tRel \gets e.prec.tail - e.dep$\;
    \Return  $(hRel, tRel)$
  }{}
  
  \BlankLine
  \SetKwFunction{query}{findDep}
  \SetKwFunction{findHead}{findDepHead}
  \SetKwFunction{findTail}{findDepTail}
  \myalg{\query{$e, r$}}{
    $\pprec_t^{d_h} \gets (e.\pprec.\ptail.i, \varname{r.\phead}.j)$\;
    $d_h \gets \pprec_t^{d_h} - e.\pmeta.\trel$\;
    $\pprec_h^{d_t} \gets (e.\pprec.\phead.i, \varname{r.\ptail}.j)$\;
    $d_t \gets \pprec_h^{d_t} - e.\pmeta.\hrel$\;
    \Return the intersection of $(d_h, d_t)$ and $e.\pdep$\;
  }{}
  
  \SetKwFunction{findPrec}{findPrec}
  \myalg{\findPrec{$e, r$}}{
     $g_h \gets r.\phead + e.\pmeta.\hrel $\;
     $g_t \gets r.\ptail + e.\pmeta.\trel $\;
    \Return $(g_h, g_t)$; 
  }
  
  \BlankLine
  
  \SetKwFunction{clear}{removeDep}
  \myalg{\clear{$e, s$}}{
    $\varname{newDepSet} \gets$ \text{delete} $s$ \text{from} $e.\pdep$\;
    \For{$\varname{newDep} \in \varname{newDepSet}$}{
        $\varname{newPrec} \gets \ffindprecupdate(e, \varname{newDep})$\;
        $p \gets |\varname{newDep}| == 1 \:?\: \ncPattern : \rr$\;
        \text{add} $(\varname{newPrec}, \varname{newDep}, p, e.\pmeta)$ to $\varname{retSet}$\;
    }
    \Return $\varname{retSet}$
  }{}
  
  \caption{Algorithms for the \rr pattern}
  \label{alg:rr}
\end{algorithm}}

\ptr{
\begin{figure}[!t]
\vspace{-10pt}
    \centering
    \includegraphics[height=30mm]{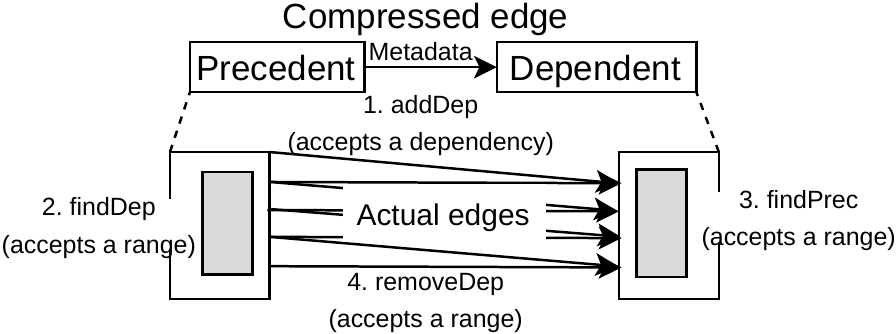}
    \vspace{-1mm}
    \caption{An overview of the four key functions}
    \label{fig:api}
    \vspace{-6mm}
\end{figure}
}

\stitle{\rr} 
\ptr{The four key functions
for \rr are shown in Algorithm~\ref{alg:rr}.}
Consider $\fcompress(e, e^\prime)$, 
which adds a dependency $e^\prime$
to a compressed edge $e$. 
According to the definition of \rr,
the dependency $e^\prime$ can only be added to $e$
if the relative positions of $e^\prime$
equal the relative positions $(\hrel,\trel)$ in $e.meta$.
We define $\frel(e^\prime)$ as the procedure to 
compute the relative positions of $e^\prime.\pdep$ with respect to 
the head and tail cell of $e^\prime.\pprec$, respectively 
\ptr{(i.e., line 9-12 in Algorithm~\ref{alg:rr})}. 
For example, if $e^\prime = \code{A5:B7} \rightarrow \code{C5}$, 
the relative positions are $\hrel=\code{A5}-\code{C5}=(-2,0)$ 
and $\trel=\code{B7}-\code{C5}=(-1,2)$. 
This dependency can be added to the compressed edge $e$ in \pFigure~\ref{fig:rr} 
because $e.meta.\hrel$ and $e.meta.\trel$ 
equal $\hrel$ and $\trel$, respectively. 
Therefore, we create a new compressed edge, 
where $\pprec = e.\pprec \bigoplus e^\prime.\pprec$, 
$\pdep = e.\pdep \bigoplus e^\prime.\pdep$, and $meta = e.meta$. 
If $e$ is an uncompressed edge (a single dependency), 
we compare $\frel(e^\prime)$ and $\frel(e)$ to check whether the two 
edges can be compressed using the \rr pattern.

Next, consider $\ffinddepupdate(e, r)$, 
which finds the \ldeps (denoted as a range $d$) of 
a range of cells $r$ that are contained in $e.\pprec$. 
To determine $d$, we need to 
find its head cell $d_h$ and tail cell $d_t$. 
Since the \sprec of each cell in $d$ forms a sliding window on $e.\pprec$ 
as shown in \pFigure~\ref{fig:rr_find_dep}, the intuition for computing $d_h$ 
is that the top row of $r$ must intersect with the bottom row of $d_h$'s \lprec. 
Similarly, the bottom row of $r$ must intersect 
with the top row of $d_t$'s \lprec. 
So we ``back calculate'' $d_h$ and $d_t$ based on $r$. 
Specifically, we use the following invariant to compute $d_h$, 
\[
d_h + \trel = \pprec_t^{d_h}
\]
where $\pprec_t^{d_h}$ is $d_h$'s \sprec's tail cell 
and $\trel$ is the relative position of $d_h$ with respect to $\pprec_t^{d_h}$ 
as shown in \pFigure~\ref{fig:rr_find_dep}. 
Since $\trel$ is known, the remaining task is to compute $\pprec_t^{d_h}$. 
We know that $\pprec_t^{d_h}$ is in the bottom row of $d_h$'s \lprec 
since it is a tail cell 
and that the bottom row of $d_h$'s \lprec intersects the top row of $r$. 
So $\pprec_t^{d_h}$ is in the top row of $r$ 
and its row index is the row index of $r$'s head cell (i.e., $r.\phead.j$).
Since $\pprec_t^{d_h}$ is a tail cell, 
it is in the right-most column of $e.\pprec$. 
So its column index is $e.\pprec.\ptail.i$.

\begin{figure}[!t]
  \begin{tabular}{cc}
      \begin{minipage}[b]{0.56\linewidth}
        \centering
        \includegraphics[width=0.97\linewidth]{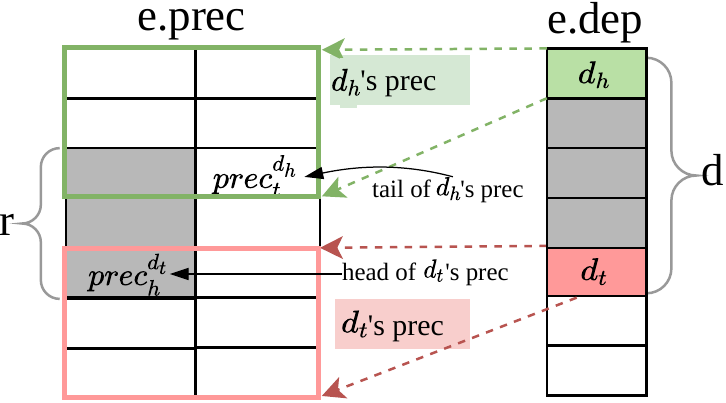}
        \caption{\small An example of $\ffinddepupdate(e, r)$ for \rr}
        \label{fig:rr_find_dep}       
      \end{minipage} 
      \begin{minipage}[b]{0.44\linewidth}
        \centering
        \includegraphics[width=0.98\linewidth]{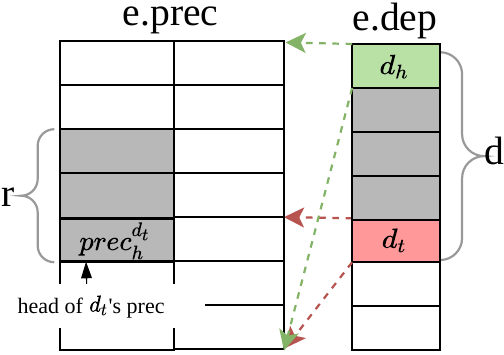}
        \caption{\small An example of $\ffinddepupdate(e, r)$ for \rf}
        \label{fig:rf_find_dep}
      \end{minipage}
  \end{tabular}
  \vspace{-7mm}
\end{figure}

Finding the tail cell $d_t$ adopts a dual procedure. 
Based on the invariant $d_t + \hrel = \pprec_h^{d_t}$, 
we need to find $d_t$'s \lprec's head cell $\pprec_h^{d_t}$. 
As shown in \pFigure~\ref{fig:rr_find_dep}, 
$\pprec_h^{d_t}$ should be in the last row of $r$ 
and in the left-most column of $e.\pprec$. 
Therefore, we have $\pprec_h^{d_t}.i=e.\pprec.\phead.i$ 
and $\pprec_h^{d_t}.j=r.\ptail.j$. 
We note this procedure can output a range $d$ 
that is beyond $e.\pdep$. 
In this case, we take the intersection between $d$ 
and $e.\pdep$ to return a valid range.

The third function, $\ffindprecupdate(e, s)$ finds the \lprecs (denoted as a range $g$)
for a range of cells $s$ contained in $e.\pdep$. 
By the definition of \rr, the \lprecs of the cells in $s$ 
form ``sliding windows'' on $e.\pprec$ as we move from 
$s.\phead$ to $s.\ptail$; $g$ is simply the union of 
the \lprecs of all cells in $s$. 
So $g.\phead$ is the head cell of $s.\phead$'s \lprec 
and $g.\ptail$ is the tail cell of $s.\ptail$'s \lprec. 
We have $g.\phead = s.\phead + \hrel$ and 
$g.\ptail = s.\ptail + \trel$. 

Finally, consider $\fclear(e, s)$, which removes the dependencies 
for a range of formula cells $s$ in $e.\pdep$. 
We first subtract $s$ from $e.\pdep$, 
after which we are left with a range or a union of two ranges. 
For example, if we remove \code{C2} from \code{C1:C4}, 
the remainder is composed of two ranges: \code{C1} and \code{C3:C4}. 
For each range $\varname{newDep}$ in the remaining \sdeps, 
we generate its corresponding \lprec $\varname{newPrec}$ 
using $\ffindprecupdate(e, \varname{newDep})$.

\stitle{\rf, \fr, \ff}
We now discuss the key functions for \rf. 
The function $\fcompress(e, e^\prime)$ for the \rf pattern
has similar logic to the one for \rr 
and is different only 
in the compression condition. 
By definition of \rf, we first compute the relative position 
between $e^\prime.\pdep$ and the head cell of $e^\prime.\pprec$ (denoted $\hrel$) 
and check whether $\hrel$ and $e.meta.\hrel$ are the same.
If so, we additionally check whether 
the tail cell of $e^\prime.\pprec$ is the same as $e.meta.\tfix$. 

$\ffinddepupdate(e, r)$ finds the range of dependents $d$ 
for a range of cells $r$ contained in $e.\pprec$. 
Similar to \rr, we need to find $d$'s head cell $d_h$ and tail cell $d_t$. 
To compute $d_h$, we use the intuition shown in \pFigure~\ref{fig:rf_find_dep}:
the \lprec of $e.\pdep.\phead$ equals $e.\pprec$ and the \lprec of each cell 
in $e.\pdep$ shrinks when we move from $e.\pdep$'s head to its tail cell. 
That is, $e.\pdep.\phead$ references the entire range of $e.\pprec$ 
and is the dependent of any $r$ contained in $e.\pprec$. 
So $e.\pdep.\phead$ equals $d_h$.

To compute $d_t$, we use the observation that the \lprec of each cell 
in $e.\pdep$ shrinks as we move from $d_h$ to $d_t$ 
such that the bottom row of $r$ should intersect 
with the top row of the \lprec of $d_t$.
Therefore, to compute $d_t$, we leverage the invariant 
$d_t + \hrel = \pprec_h^{d_t}$, where $\pprec_h^{d_t}$ is 
the head cell of $d_t$'s \lprec. 
Since $\hrel$ is known, we need to compute $\pprec_h^{d_t}$. 
As \pFigure~\ref{fig:rf_find_dep} shows, 
since $\pprec_h^{d_t}$ is in the bottom row of $r$, 
its row index is $r.\ptail.j$ 
and since $\pprec_h^{d_t}$ is a head cell, 
its column index is $e.\pprec.\phead.i$. 

Consider $\ffindprecupdate(e, s)$, 
which finds the \lprecs (denoted as a range $g$) of $s$ contained in $e.\pdep$. 
Our observation is that the precedent of $s.\phead$ contains 
all of the precedents of other cells in $s$ 
since the \lprec of each cell in $e.\pdep$ is shrinking 
as we move from $s.\phead$ to $s.\ptail$.
Therefore, $g$ is the \lprec of $s.\phead$, 
and is computed as $g.\phead = s.\phead + \hrel$ and $g.\ptail = \tfix$. 
The function $\fclear(e, s)$ for \rf follows the same logic as \rr. 
We first remove $s$ from $e.\pdep$ to return one or two ranges. 
For each returned range $\varname{newDep}$,
we generate their corresponding \lprec $\varname{newPrec}$ 
using $\ffindprecupdate(e, \varname{newDep})$ of \rf.  

\fr is a dual pattern of \rf, 
so its algorithms can be easily derived from the algorithms above. 
\ff's algorithms can also be derived from \fr and \rf. 
We omit these algorithms due to space limits. 

\stitle{Algorithmic complexity}
All algorithms for the basic patterns 
are $O(1)$, independent  of the number of edges compressed.

\comment{
\subsection{Pattern Space} 
\label{sec:pattern_space}
The basic patterns can be mixed to express complex patterns 
where formula cells reference multiple ranges. 
Consider the example in \pFigure~\ref{fig:locality_example}. 
We will have four compressed edges for this example, 
where $\code{A$i$} \rightarrow \code{N$i$}$, 
$\code{A($i$-1)} \rightarrow \code{N$i$}$, 
$\code{M$i$} \rightarrow \code{N$i$}$, 
and $\code{N($i$-1)} \rightarrow \code{N$i$}$ 
can be compressed using \rr. 
We can also handle messy spreadsheets, where the formula cells 
do not strictly follow the defined patterns. 
For example, if \code{N4=IF(A4=A3,N3+M4,M4)} in 
\pFigure~\ref{fig:locality_example} is modified to be 
\code{N4=IF(A4=A3,N3+M4,M4\textcolor{red}{+K5})}. 
We should still extract the four compressed edges and regard 
$\code{K5} \rightarrow \code{N4}$ as an uncompressed edge. 
In \pSection~\ref{sec:sys}, we will discuss a 
general compression framework 
that extracts basic patterns from 
complex and messy formula references. 
Beyond the basic patterns, 
we additionally consider formula cells 
that are away from each other at a fixed gap along a column or row 
(e.g., every other row in a column) 
and a special pattern of \rr to accelerate finding \sdeps in \rr. 
These new patterns appear often in practice and are discussed 
in \pSection~\ref{sec:extension}. 
}

\section{\sys Framework}
\label{sec:sys}

\rtwo{We now introduce the \sys framework, 
which includes four generic and extensible algorithms 
for efficiently compressing, querying, 
and modifying formula graphs. 
These algorithms are extensible since 
they only utilize the four key functions per pattern 
as shown in Section~\ref{sec:keyfunctions} 
and any pattern can be integrated into the \sys framework 
if they implement these functions.} 
In this section, we first introduce these algorithms 
\ptr{and analyze their complexity}
(\pSection~\ref{sec:compress}-\ref{sec:maintain}).
Then, we compare the complexity of \sys against an approach that 
does not compress the formula graph (\pSection~\ref{sec:complexity}).
\ptr{We assume the formula graphs 
in both approaches are implemented via an adjacency list 
and we build an R-Tree index on the vertices to quickly 
find the overlapping vertices for an input range.}
\ppaper{For both approaches, we build an R-Tree index 
on the vertices to quickly find the overlapping vertices 
for an input range.}
The complexity of operations on an R-Tree varies
based on the design choices. 
Our analysis assumes the complexity for searching, 
inserting, and deleting one range is $O(N)$, $O(\log N)$, 
and $O(\log N)$, respectively, 
where $N$ is the number of ranges stored in the R-Tree. 

\ptr{
\begin{algorithm}[!t]
  \small
  \SetAlgoLined\DontPrintSemicolon
  \SetArgSty{textrm}
  \SetKw{break}{break}
  \SetKwFunction{algo}{\fcompressname}
  \SetKwFunction{proc}{genCompEdges}
  \SetKwProg{myalg}{Algorithm}{}{}
  
  \myalg{\algo{$G(E, V)$, $e^\prime$}}{
      $\varname{isCompressed} \gets$ \text{false}\;

        $\varname{pSet} \gets$ pre-defined patterns\;
        $\varname{eSet} \gets$ \text{find all} $e \in E$ \text{whose} $e.\pdep$ \text{is} \text{adjacent to} \;
        \text{\hspace{10mm}} $e^\prime.\pdep$ \text{on column or row axis} \;
        
        \For{$\varname{candE} \in \varname{eSet}$}{
            $\varname{edgePairs} \gets$ \proc{$\varname{candE}$, $e^\prime$, $\varname{pSet}$}\;
            $\varname{edgePairSet}.\procname{add}(\varname{edgePairs})$\;
        }
        
        \If{$\varname{edgePairSet}$ \text{is not empty}}{
           $\varname{edgePair} \gets$ \text{sort} $\varname{edgePairSet}$ \text{by heuristics and}\;
           \text{\hspace{17mm}take the first}\;
           \text{maintain} $G$ \text{using} $\varname{edgePair}$\;
           $\varname{isCompressed} \gets$ \text{true}\;
           \break\;
        }
      \uIf{$\varname{isCompressed}$ \text{is false}}{
        \text{insert} $e^\prime$ \text{into} $G$\;
      }
  }{}
  
  \SetKwProg{myproc}{Procedure}{}{}
  \myproc{\proc{$\varname{candE}$, $e^\prime$, $\varname{pSet}$}}{
    \uIf{$\varname{candE}.p == \ncPattern$}{
        \For{$p \in \varname{pSet}$}{
            $\varname{pair} \gets (p.\fcompress(\varname{candE}, e^\prime), \varname{candE})$ \;
            $\varname{edgePairs}.\procname{addIfValid}(\varname{pair})$\;
        }
    } \Else {
        $\varname{pair} \gets (\varname{candE}.p.\fcompress(\varname{candE}, e^\prime), \varname{candE})$\;
        $\varname{edgePairs}.\procname{addIfValid}(\varname{pair})$\;
    }
    \KwRet $\varname{edgePairs}$ \;
  }
  \caption{\small Compressing a dependency $e^\prime$ into $G(E, V)$}
  \label{alg:compression}
\end{algorithm}}

\begin{figure*}[!t]
\vspace{-20pt}
    \centering
    \includegraphics[height=27mm]{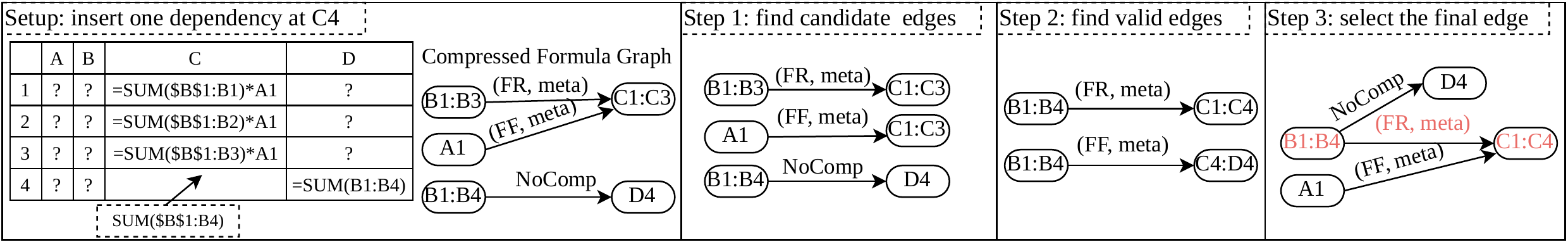}
    \vspace{-1mm}
    \caption{An example of compressing one dependency inserted at \code{C4}}
    \label{fig:compression_example}
    \vspace{-5mm}
\end{figure*}

\subsection{Compressing a Formula Graph}
\label{sec:compress} 

We formalize the problem of 
minimizing the number of edges of the compressed formula graph 
based on the predefined basic patterns 
and present our compression algorithm.

\subsubsection{Problem formalization}
Using the definition of the compressed graph $G$, 
the problem of minimizing the number of edges in $G$ 
is equivalent to the problem of 
finding a partition $P = \{E_1^\prime, E_2^\prime, \cdots, E_N^\prime\}$ 
of the uncompressed edge set $E^\prime$ such that $N$ is minimum, 
and each $E_i^\prime$ is compressed by a single pattern or 
only includes one uncompressed edge. 
\ppaper{We call this problem Compressed Edge Minimization (or CEM for short).}
\ptr{The optimization problem, which we call Compressed Edge Minimization (or CEM for short), is defined as follows: 
\begin{equation*}
    \small
    \begin{aligned}
    & \underset{P = \{E_1^\prime, \cdots, E_N^\prime\}}{\text{minimize}} & & N \\
    & \text{where} & & E_i^\prime \text{ is compressed by a pattern or is} 
    \\ & & & \text{an uncompressed edge,} \forall i \in 1\ldots N \\
    & & & \, \cup_i E_i^\prime = E^\prime
    \end{aligned}
\vspace{-1mm}
\end{equation*}}
We now show CEM is {\sc NP-Hard} 
\ptr{even when we only consider each basic pattern}. 
\begin{theorem}[CEM Hardness]
\vspace{-1mm}
\rone{Compressing the graph $G^\prime$ into $G$
while minimizing the number of edges of $G$ is {\sc NP-Hard}, 
even when restricted to each basic pattern.}
\vspace{-1mm}
\end{theorem}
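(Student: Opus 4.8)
The plan is to reduce from rectilinear picture compression~\cite{intractability}, which is \nphard: given an $n\times n$ binary matrix $M$ and an integer $K$, decide whether the $1$-entries of $M$ can be covered by at most $K$ axis-aligned rectangles, each of which contains only $1$-entries. Given such an instance I would construct, in polynomial time, a formula graph $G^\prime$ that can be compressed to at most $K$ edges exactly when $M$ admits such a cover of size $K$. I would present the reduction first for the case restricted to \ff, and then adapt it to each remaining basic pattern.

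For \ff, pick a single cell $P$ lying outside the $n\times n$ region to act as a common precedent, and for every entry with $M[i][j]=1$ place a formula cell at position $(i,j)$ whose formula references $P$ (e.g. \code{=P}); every $0$-entry is left as a pure value. Then $G^\prime$ contains exactly one edge $P\rightarrow(i,j)$ per $1$-entry, and all of these edges share the identical fixed precedent $P$. The heart of the argument is a tight correspondence between admissible \ff-compressions and solid all-ones rectangles. A group of these edges can be replaced by one compressed edge $e=(\pprec,\pdep,\ff,\pmeta)$ only if the replacement is lossless. Because $e.\pdep=\bigoplus_j e_j^i.\pdep$ is the \emph{bounding} range of the grouped dependents and reconstruction re-asserts the edge $P\rightarrow c$ for \emph{every} cell $c$ in that range, losslessness forces each such $c$ to be an original $1$-entry: the grouped dependents must fill a hole-free rectangle of $1$-entries (with $\hfix=\tfix=P$). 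Conversely, any all-ones rectangle is a valid \ff group. Moreover, since the precedent is the same cell for all edges, no two of them can be grouped under \rr, \rf, or \fr, each of which would require a varying relative offset on the head or the tail; hence on this gadget the admissible groups are exactly the all-ones rectangles, whether or not the other patterns are enabled.

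I would then observe that lossless compression permits two compressed edges to overlap: re-covering an already-covered edge creates no false positive and leaves the edge recoverable, so the condition $E^\prime=\cup_i E_i^\prime$ is genuinely a \emph{cover}. Consequently, minimizing the number of compressed edges equals the minimum number of all-ones rectangles that cover the $1$-entries of $M$, and $G^\prime$ compresses to $\le K$ edges iff $M$ has a rectangle cover of size $\le K$. This shows the problem is \nphard both in general and when restricted to \ff. To obtain the claim for \emph{each} basic pattern, I would rerun the construction with gadgets in which exactly one pattern applies. For \rr, let every $1$-entry reference the single cell immediately to its left, so all edges carry the constant offset $\hrel=\trel=(-1,0)$ and only \rr can merge them; for \rf, let each formula reference the range spanning from the cell on its left down to a common bottom-right sink $T$ (relative head, fixed $\tfix=T$); for \fr, dually use a common top-left source as the fixed head with a relative tail. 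The same losslessness argument shows that in each gadget a group is admissible iff its dependents form a solid all-ones rectangle and that the three non-target patterns can merge no pair, so the reduction carries over verbatim.

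I expect the main obstacle to be making the losslessness-to-rectangle equivalence fully rigorous, and in particular showing that any group whose dependent bounding box contains a $0$-entry must be lossy, together with justifying the covering (rather than partitioning) semantics: since partitioning a region into rectangles is solvable in polynomial time, it is precisely the freedom to overlap compressed edges that makes the objective coincide with the \nphard rectilinear picture compression problem.
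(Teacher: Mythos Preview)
Your reduction is essentially the paper's own: both reduce from rectilinear picture compression by placing a formula at each $1$-entry and a pure value at each $0$-entry, with the same per-pattern gadgets (a common fixed reference for \ff, a reference to the immediate left for \rr, and a relative head or tail paired with a fixed endpoint for \rf and \fr). The paper's sketch is terser and does not engage with the cover-versus-partition subtlety you flag in your final paragraph---it simply asserts equivalence with RPC without discussing whether the $E_i^\prime$ may overlap---so your treatment is more careful on that point, but the overall route is the same.
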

\begin{proof}
(Sketch) 
\rone{
We reduce 
the \textit{rectilinear picture compression} (RPC for short) 
problem, which is known to be {\sc NP-Hard}~\cite{intractability}, to CEM. 
The input to RPC is a $m\times n$ matrix of 0's and 1's,
with the goal to find the minimal number of rectangles 
that precisely cover the 1's. 
We reduce RPC to CEM by mapping 
the input matrix to a spreadsheet range $R$: 
for each value at column $i$ and row $j$ in the matrix, 
if the value is 1, we place a formula at $(i, j)$ in $R$; 
otherwise, we place a pure value at $(i, j)$ in $R$. 
If the dependencies of any range of formulae in $R$ 
can be compressed into a single edge by a pattern $p$, then 
the RPC problem is equivalent to minimizing the number of 
edges of the compressed graph $G$ for $R$ if we only consider 
the pattern $p$. }

\rone{
Therefore, we need to construct a range $R$ for each pattern $p$ 
such that the dependencies of any range of formulae in $R$ 
can be compressed into a single edge by $p$. 
For \ff, we let all of the formulae in $R$ 
reference the same fixed range outside $R$ to 
meet the above condition. 
Similarly, for \rr, we let each formula in $R$ 
reference the cell to its left. 
For \rf, we first construct another range $R^\prime$ 
that can be compressed into a single edge by \rf 
and has the same shape as $R$. 
Note that in $R^\prime$, the dependencies of any range of formulae 
can also be compressed into a single edge by \rf. 
Afterwards, we generate $R$ by 
copying the formula at $(i, j)$ in $R^\prime$ 
to $(i, j)$ in $R$ if the cell $(i, j)$ in $R$ is a formula cell. 
This way, we construct a range $R$ where the dependencies 
of any range of formulae in $R$ can be compressed into a single 
edge by \rf. The case for \fr is done symmetrically.}
\end{proof}
\ptr{CEM is also trivially {\sc NP-Complete} since verifying
that a partition using \ff is correct is in {\sc PTIME}.}
We tested the algorithm that enumerates all possible partitions 
and found it cannot finish within 30 mins for a spreadsheet with 
96 edges because the number of possible partitions 
is a Bell number~\cite{bellnumber}. 
To reduce the compression overhead, 
we propose a greedy compression algorithm.  


\subsubsection{Greedy compression algorithm}
Our algorithm compresses a list of dependencies
between formula cells and their referenced ranges 
by repeatedly inserting each dependency into the compressed graph 
and determining the partitions as well 
as the corresponding compression patterns. 
Observing that pre-defined patterns compress 
the dependencies in adjacent formula cells, 
we use this constraint to quickly find the candidate edges 
that one inserted dependency can be compressed into.
If there are multiple candidate edges, 
we leverage several heuristics based 
on our analysis of real-world spreadsheets
to decide the edge that can best reduce the graph size 
(e.g., by leveraging the dollar sign cues in the 
formula expression if available).

\ptr{Algorithm~\ref{alg:compression} shows our approach for 
compressing one dependency $e^\prime = (\pprec, \pdep)$ 
into a compressed formula graph $G$.}
\ppaper{The algorithm compresses one dependency 
$e^\prime = (\pprec, \pdep)$ into a compressed formula graph $G$}
We use the example in \pFigure~\ref{fig:compression_example} 
to explain compressing $e^\prime$ into $G$.
The setup of \pFigure~\ref{fig:compression_example} (the left pane) shows 
each formula cell in column \code{C} referencing two ranges. 
The references to column \code{B} follow \fr  
and the references to column \code{A} follow \ff. 
In addition, we have an uncompressed edge of 
\code{D4} referencing \code{B1:B4}. 
Our example assumes that a formula \code{SUM(\$B\$1:B4)} 
is inserted at \code{C4} (i.e., $e^\prime=(\code{B1:B4},\code{C4})$). 

\stitle{Find candidate edges:}
The first step is to quickly find candidate edges that the dependency $e^\prime$ can be compressed into. 
Specifically, an edge $e$ is a candidate if 
$e.\pdep$ is adjacent to $e^\prime.\pdep$ 
along the row or column axis.
Step 1 in \pFigure~\ref{fig:compression_example} 
shows that all three edges  
meet this condition because $e^\prime.\pdep = \code{C4}$ is adjacent to 
both \code{C1:C3} and \code{D4}. 
To find these edges, 
we first shift $e^\prime.\pdep$ by one cell 
in all four directions (i.e., up/down/left/right)
and use the index on the vertices (e.g., an R-Tree~\cite{r-tree})
to quickly find ranges that overlap with the shifted $e^\prime.\pdep$. 
Then, for each overlapping range (e.g., \code{D4}), 
we find its \sprec (e.g., $\code{B1:B4}$) 
and add this edge (e.g., $\code{B1:B4} \rightarrow \code{D4}$) 
into the candidate edge set.

\stitle{Find valid candidates:} 
Next, we check whether $e^\prime$ can be compressed into 
each candidate edge using $\fcompress(e, e^\prime)$ 
from \pSection~\ref{sec:locality} to find the valid compressed edges
\ptr{ (i.e., \procname{genCompEdges} in Algorithm~\ref{alg:compression})}. 
We consider two cases. 
First, if the candidate edge $\varname{candE}$ is not compressed, 
we check whether $e^\prime$ and $\varname{candE}$ 
can be compressed into a new edge $\varname{newEdge}$ 
using the predefined patterns. 
If so, we store $\varname{newEdge}$ as a valid candidate edge
\ptr{ (i.e., \procname{addIfValid} in Algorithem~\ref{alg:compression})}. 
If, instead, $\varname{candE}$ is a compressed edge, 
we check whether $e^\prime$ can be compressed into $\varname{candE}$ 
and if so, we generate a valid edge. 
Step 2 in \pFigure~\ref{fig:compression_example} shows 
two valid compressed edges because 
the edge $\code{B1:B4} \rightarrow \code{C4}$ can be compressed into 
$\code{B1:B3} \rightarrow \code{C1:C3}$ 
or $\code{B1:B4} \rightarrow \code{D4}$. 

\stitle{Select the final edge:} 
The final step is to select the final edge from the valid ones. 
The selection is based on the following heuristics, in order. 
First, we prioritize column-wise compression over row-wise compression. 
If this heuristic does not return a single edge, 
we further compare the priority of each remaining edge's pattern. 
If one pattern $p_a$ is a special case of another pattern $p_b$, 
then we choose $p_a$ over $p_b$ because 
we expect the special pattern $p_a$ 
to be more efficient. In \pSection~\ref{sec:extension}, 
we will describe one such special pattern of $\rr$. 
Otherwise, we leverage the dollar sign (\$) information, 
if available, sometimes specified as part of the formula strings. 
For example, for the formula string $\code{SUM(\$B\$1:B4)}$ 
at $\code{C4}$, we will prioritize compressing its dependency 
$\code{B1:B4} \rightarrow \code{C4}$ using $\fr$ over other patterns 
because the head cell \code{\$B\$1} in $\code{SUM(\$B\$1:B4)}$ 
has dollar sign annotations, but its tail cell does not, 
which indicates that $\code{SUM(\$B\$1:B4)}$  
follows the $\fr$ pattern if it is generated via autofill. 
For our example in \pFigure~\ref{fig:compression_example}, 
we choose the compressed edge 
($\code{B1:B4} \rightarrow \code{C1:C4}$) 
over ($\code{B1:B4} \rightarrow \code{C4:D4}$) 
because the former one uses column-wise compression. 
Finally, we delete the old edge and insert the newly compressed edge. 

\ptr{
\stitle{Algorithmic complexity}
Our analysis assumes the inserted dependencies 
have no duplicates, 
so its size is $|E^\prime|$, the number of uncompressed edges. 
For each inserted dependency, we leverage the R-Tree to find
the candidate edges, taking $O(|V|)$ operations, 
where $|V|$ is the number of vertices 
in the compressed formula graph $G$. 
The number of the candidate edges is $O(|E|)$. 
For these candidate edges, 
it takes $O(|E|)$ operations to find the valid compressed edges 
and the final edge that the input dependency 
is compressed into. In addition, we need to maintain the R-Tree 
by removing the old and inserting the new vertices, 
which takes $O(\log |V|)$ operations. 
In total, the complexity of inserting $|E^\prime|$ dependencies 
is $O(|E^\prime| \times (|V| + |E| + \log |V|))$ = 
$O(|E^\prime| \times |E|)$ since each vertex 
is connected to at least one edge. }

\begin{algorithm}[t]
    \small
	\SetAlgoLined\DontPrintSemicolon
    \SetArgSty{textrm}
    
    initiate $\varname{queue}$ as a queue containing only $r$\;
    initiate $\varname{result}$ as an empty set and an R-Tree for it \; 
    \While{$\varname{queue}$ is not empty} {
        $\varname{\precU} \gets$ \text{remove the first element in} $\varname{queue}$\;
        $\varname{precs} \gets$ \text{find vertices that overlap with} \;
        \text{\hspace{10mm}} $\varname{\precU}$ \text{via the R-Tree on} V\;
        \For{$\varname{prec} \in \varname{precs}$}{
            $\varname{edges} \gets$ $\{e: e\in E$ \text{and} $e.\pprec = \varname{prec}\}$\;
            \For{$e \in \varname{edges}$}{
                $\varname{dep} \gets e.p.\ffinddepupdate(e, \varname{\precU})$\;
                $\varname{newDepSet} \gets$ $\text{Find the subset of } \varname{dep} \text{ not}$ \;
                $\text{\hspace{7mm}contained in } \varname{result} \text{ via the R-Tree on } \varname{result}$\;
                \For{$\varname{newDep} \in \varname{newDepSet}$}{
                    \text{add} $\varname{newDep}$ \text{to} $\varname{result}$ \text{and its R-Tree}\;
                    \text{add} $\varname{newDep}$ \text{to} $\varname{queue}$\;
                }
            }
        }
    }
    \Return $\varname{result}$

    \caption{\small Find \sdeps of a column/row of cells $r$ in $G(E, V)$}
	\label{alg:query}
\end{algorithm}

\subsection{Querying a Formula Graph}
\label{sec:query} 
We now discuss finding the \sdeps or \sprecs of 
a column/row of cells $r$ 
in $G$ using the key functions from \pSection~\ref{sec:locality}. 
Since finding \ldeps is the dual problem of finding \lprecs, 
we focus on the former. 
We apply Breadth-First-Search (BFS), but with three major differences. 
First, when we find the direct \ldeps of $r$, 
we need to consider all of the vertices in $G$ that overlap with $r$. 
Second, since an edge $e$ in $G$ can be a compressed edge, 
finding the direct \ldeps of $r$ in $e$ may not be the full $e.\pdep$, 
but a subset instead. So we need to 
find the real \ldeps within $e.\pdep$. 
Third, for a ``real'' \ldep, which will in turn serve as a \sprec 
for subsequent searches, we need to add the subset of this \ldep 
that has not yet been visited during BFS.
We illustrate the three modifications below. 

Algorithm~\ref{alg:query} shows the modified BFS algorithm: 
it takes a column or row of cells $r$ as input and 
returns the set of ranges that depend on $r$. 
We explain this algorithm using the compressed graph in Step 3 
of \pFigure~\ref{fig:compression_example}. 
Our example involves finding the \ldeps of \code{B2}. 
Our algorithm uses a queue to store the ranges to be visited 
in the future and a set $\varname{result}$ to store 
the ranges that depend on $r$ and have been visited. 
An additional R-Tree is also built for $\varname{result}$.
For each range $\varname{\precU}$ in this queue,
we find its direct \ldeps. 
As mentioned earlier, we need to consider the ranges that overlap 
with $\varname{\precU}$ (i.e., \code{B1:B4} for \code{B2}). 
Next, we find the direct \ldeps of each overlapping range 
and the corresponding edges 
(i.e., $\code{B1:B4} \rightarrow \code{C1:C4}$ 
and $\code{B1:B4} \rightarrow \code{D4}$). 
Since some of these edges can be compressed, 
for a compressed edge $e$ 
we need to find the real direct \ldep within $e.\pdep$ 
for $\varname{\precU}$, 
which is done by the key function 
$\ffinddepupdate(e, \varname{\precU})$. 
For the example of the input \code{B2}, 
we return \code{C2:C4} for the edge 
$\code{B1:B4} \rightarrow \code{C1:C4}$ 
since \code{C1} does not depend on \code{B2}. 
Finally, we find the subset of the real \ldep 
that has not yet been visited via the R-Tree 
on the \varname{result} set, 
and add the subset to the queue, the set \varname{result}, 
and the R-Tree on \varname{result}. 
We repeat the process until the queue is empty. 
For the example of the \ldep \code{C2:C4}, 
if we have visited \code{C2:C3}, 
we will only store \code{C4} in the queue 
and \varname{result}.

\ptr{
\stitle{Algorithmic complexity}
To analyze the complexity of Algorithm~\ref{alg:query}, we consider 
two cases: 1) Algorithm~\ref{alg:query} accesses each edge in $G$ 
at most once; 2) otherwise. 
For the first case, each vertex in $G$ will serve 
as the \sprec at most once when we find direct \sdeps 
(i.e., the inner part of the first \textbf{for} loop in Algorithm~\ref{alg:query}). 
So it will only repeat $O(|V|)$ times. 
To find a \sprec (i.e., \varname{prec} in 
the first \textbf{for} loop in Algorithm~\ref{alg:query}), 
we need to search the R-Tree, taking $O(|V|)$. 
To find the real direct \sdeps for a \sprec using $\ffinddepupdate$, 
we need to spend $O(dep\_num)$ operations, where $dep\_num$ 
is the number of direct dependents of a \sprec 
and $\ffinddepupdate$ takes a constant time. 
For each real direct \sdep, we need to additionally 
find the subset that is not contained in \varname{result} 
using the R-Tree on \varname{result}, 
taking $O(\text{size of \varname{result}})$. 
Since each range in \varname{result} 
will be a \sprec, the size \varname{result} is $O(|V|)$. 
In addition, the total cost for maintaining 
the R-Tree on \varname{result} is $O(|V| \times \log|V|)$ 
since the size of \varname{result} is $O(|V|)$. 
To sum up, the complexity here 
is $O(|V|) \times (O(|V| + O(dep\_num) \times O(|V|)) 
+ O(|V| \times \log|V|)= O(|V|^2 + |E|\times|V|)$. 

For the second case, 
the first \textbf{while} loop in Algorithm~\ref{alg:query}
runs $O|V^\prime|$ times, where $|V^\prime|$ 
is the number of vertices in the uncompressed graph. 
This is because the size of \varname{result} is $O(|V^\prime|)$, 
so the total number of ranges inserted 
into the queue is also $O(|V^\prime|)$.  
For each range $\varname{\precU}$ in the queue, 
we check $O(|E|)$ edges in $G$. 
For each edge, we find the real dependent using $O(1)$ 
and take $O(|V^\prime|)$ operations 
to find the subset of the real dependent that 
is not contained in \varname{result}. 
The total cost for maintaining 
the R-Tree on \varname{result} is $O(|V^\prime| \times \log|V^\prime|)$. 
So the complexity is 
$O(|V^\prime|) \times (O(|V|) + O(|E|\times|V^\prime|)) + 
O(|V^\prime| \times \log|V^\prime|$ = 
$O(|V^\prime|^2\times|E|)$. }


%
%

\subsection{Maintaining a Formula Graph}
\label{sec:maintain}
We now discuss maintaining the formula graph 
when users insert, clear, or update formula cells. 
We process inserts using \ptr{Algorithm~\ref{alg:compression}}
\ppaper{the algorithm from Sec.~\ref{sec:compress}}. 
Since an update can be modeled as a clearing operation plus an insert, 
we focus on clearing formula cells. 

The idea of clearing a column/row of formula cells $s$ 
is to delete $s$ from the edges whose \ldeps (i.e., formula cells) 
overlap with $s$ using $\fclear(e,s)$ from \pSection~\ref{sec:locality}. 
First, we find the relevant edges $\varname{relEdges}$ 
whose \ldeps overlap with $s$. 
Second, for each $e \in \varname{relEdges}$, 
we generate new edges $\varname{newESet}$ 
after clearing $s$ for $e$, which is done by $\fclear(e,s)$.  
Finally, we maintain the graph 
by deleting the old edge $\varname{relEdges}$ 
and inserting the new edges $\varname{newESet}$. 

\ptr{
\stitle{Algorithmic complexity}
For this algorithm, the size of the relevant edges whose 
dependents overlap with \varname{s} is $O(|E|)$ 
and searching the R-tree takes $O|V|)$, 
so the cost for finding relevant edges is $O(|E|)$. 
From each relevant edge, clearing \varname{s} and 
maintaining the graph is $O(1)$ 
while maintaining the R-tree takes $O(\log |V|)$ time. 
In total, the complexity for removing \varname{s} 
is $O(|E| \log |V|)$.}


\begin{table}[!t]
\begin{tabular}{lll}
\hline & TACO  & NoComp \\ \hline
Building    &  $O(|E^\prime| \times |E|)$ & $O(|E^\prime| \times \log |V^\prime|)$   \\
Querying    & \begin{tabular}[l]{@{}l@{}} Case 1: $O(|V|^2 + |E|\times|V|)$ \\ Case 2: $O(|V^\prime|^2\times|E|)$ \end{tabular} & $O(|V^\prime|^2 + |E^\prime|)$      \\
Maintaining & $O(|E| \log |V|)$ & $O(|E^\prime| \log |V^\prime|)$ \\ \hline
\end{tabular}
\caption{\small Complexity comparison between \sys and \nocomp}
\label{tbl:complexity}
\vspace{-7mm}
\end{table}

\subsection{Complexity Comparison with a No Compression Approach}
\label{sec:complexity}

We now compare the complexity of \sys 
with an approach that does not compress the formula graph, 
called \nocomp. \ppaper{Similar to \sys, 
it adopts an adjacency list, 
builds an R-Tree on its vertices, uses BFS to find \sdeps. 
Recall that the uncompressed formula graph is represented 
as $G^\prime = (E^\prime, V^\prime)$.} 
\ppaper{Due to space limit, details about \sys and \nocomp's 
complexity are in our technical report~\cite{taco_tr}.}
Table~\ref{tbl:complexity} summarizes the results.
\ppaper{Note that to analyze the complexity of 
the algorithm for finding \sdeps (i.e., Algorithm~\ref{alg:query}), 
we consider two cases: 1) Algorithm~\ref{alg:query} 
accesses each edge in $G$ at most once (denoted as Case~1); 
2) otherwise (denoted as Case~2).}

\ptr{
\nocomp builds the uncompressed formula graph $G^\prime$ 
by inserting a list of dependencies into $G^\prime$. 
For each dependency, we need to insert its precedent 
and dependent into an R-Tree, 
taking $O(\log |V^\prime|)$ operations, and insert the dependency 
into the adjacency list, taking $O(1)$. 
In total, the complexity of inserting $|E^\prime|$ dependencies 
is $O(|E^\prime| \times \log |V^\prime|)$. 
\sys can be more expensive than \nocomp for building the formula graph 
because it needs to search the R-Tree and find the edge 
that an inserted dependency can be compressed into.

Next, we analyze the complexity of finding \sdeps of 
an input range $r$ via a modified BFS. 
During BFS, when we find the direct \ldeps of an input range $r$, 
we need to consider all of the vertices in $G^\prime$ 
that overlap with $r$ (i.e., via an R-Tree search). 
Similar to conventional BFS, it recursively finds 
\sdeps starting from the input range $r$. 
Each vertex in $G^\prime$ may serve as a \sprec 
when we find direct \sdeps of a \sprec (i.e., $O(|V^\prime|)$ times), 
while the cost for finding one \sprec via the R-Tree is $O(|V^\prime|)$. 
Finding direct \sdeps of one \sprec is $O(dep\_num)$, 
where $dep\_num$ is the number of the direct \sdeps. 
The overall complexity for finding \sdeps is 
$O(|V|^\prime) \times (O(|V^\prime|) + O(dep\_num)) 
= O(|V^\prime|^2 + |E^\prime|)$.}

\ppaper{We see that \sys can be more expensive 
than \nocomp for building the formula graph 
due to its compression overhead, 
but \sys is more efficient for modifying the formula graph as 
shown in Table~\ref{tbl:complexity}.} 
\ppaper{For finding \sdeps,}
\ptr{We see that}\sys is more efficient than \nocomp if 
the querying algorithm of \sys accesses each edge 
in the compressed graph $G$ at most once  
(i.e., Case 1 in Table~\ref{tbl:complexity}).  
For Case 2, \sys can be potentially more expensive than \nocomp in theory. 
To understand the performance of Case 2, 
we analyze real spreadsheets to find when this case will happen 
and become the performance bottleneck, 
and adopt an extended pattern to reduce the cost for this case 
in \pSection~\ref{sec:extension}.
In practice, we find the average number of accesses 
for an edge during BFS is relatively low. 
For the tests for finding \sdeps in \pSection~\ref{sec:exp}, 
the average number of edge accesses during BFS is 
no larger than 7 for 98\% of the tests. 
In addition, our experiments in \pSection~\ref{sec:exp} 
show that \sys is much more efficient than \nocomp  
on real spreadsheets.

\ptr{
Finally, clearing a column/row of formula cells \varname{s} 
requires searching the R-tree to 
find relevant edges (i.e., $O(|E^\prime|)$). 
Then, we will delete each relevant edge and update the 
R-tree (i.e., $O(\log |V^\prime|)$). 
In total, the complexity is $O(|E^\prime| \log |V^\prime|)$.
As shown in Table~\ref{tbl:complexity}, 
\sys is more efficient here. }

\section{Extension and Limitations}
\label{sec:extension}

%




\begin{figure}[!t]
  \centering
  \includegraphics[width=0.8\linewidth]{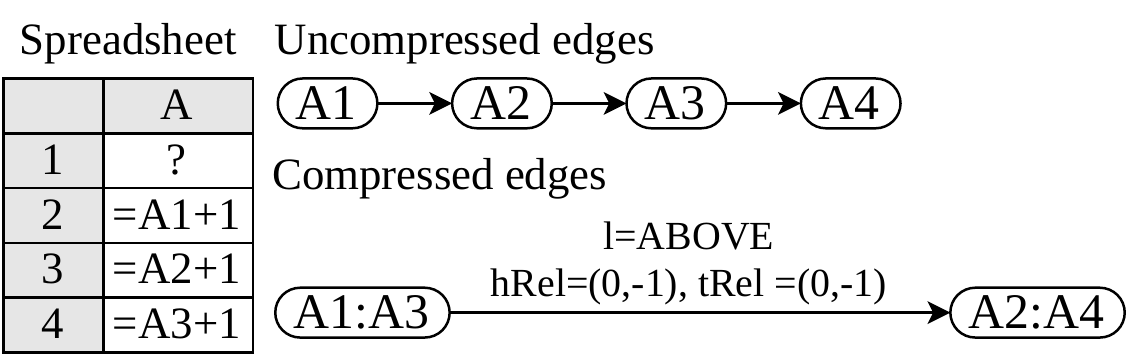}
  \vspace{-2mm}
  \caption{One example of \rrChain}
  \label{fig:rrChain}
  \vspace{-7mm}
\end{figure}

\stitle{Supporting a new pattern: \rrChain} 
As shown in \pSection~\ref{sec:sys}, 
our algorithm for finding \sdeps or \sprecs may be slow 
if an edge is repeatedly accessed multiple times. 
By examining real spreadsheets, we find one pattern 
that leads to these cases and 
becomes a performance bottleneck for \sys. 
In this section, we discuss supporting this pattern to 
further accelerate \sys. 
Our discussion focuses on a column of cells and finding \sdeps
as before; the other cases can be derived symmetrically.

Consider a column of formula cells that 
form a chain of dependencies, 
where each formula cell references its adjacent 
formula cell above or below. 
We will compress these dependencies using \rr 
because each formula cell has the same relative position 
with respect to its referenced range. 
Consider the example in \pFigure~\ref{fig:rrChain}, 
where each formula cell starting from \code{A2} 
increments the value of the above formula cell by one. 
To find the \sdep{s} of \code{A1}, 
we first find its overlapping vertex \code{A1:A3} 
and then compute its real direct \sdep: {A2}. 
Afterwards, the compressed edge is repeatedly accessed 
until we reach the end of this chain, 
which introduces high searching overhead.

To solve this problem,
we introduce a new pattern \rrChain as a special case of \rr. 
\rrChain's $\pmeta$ additionally includes a variable $l$ to 
indicate the direction of a formula cell referencing its adjacent cell. 
For example, $l$ is \code{ABOVE} in \pFigure~\ref{fig:rrChain} 
because each formula cell references its adjacent cell above. 
Our discussion focuses on $l=\code{ABOVE}$; 
the case for $l=\code{BELOW}$ can be easily derived. 
To compress a dependency $e^\prime$ into $e$ for \rrChain, 
we first check the condition of \rr, 
and then further check whether 
$e^\prime.\pprec$ is above $e^\prime.\pdep$ and 
if they are adjacent. 
To find \sdeps of a range $r$,  
we return a range $d$ between $r.\phead$'s direct \sdep 
and the tail cell of $e.\pdep$. 
Consider finding \sdeps of \code{A2} in \pFigure~\ref{fig:rrChain}. 
We return the range between \code{A3} 
(i.e., \code{A2}'s direct \sdep) and 
the tail cell of $e.\pdep$ (i.e., \code{A4}). 
Finally, clearing formula cells in $e.\pdep$ follows 
the same logic as \rr and is omitted. 

\stitle{Limitations} 
\rtwo{One limitation of the patterns in \sys 
is that they focus on adjacent formula cells. 
So, they only represent a subset  
of tabular locality in spreadsheets. 
It is possible to extend these patterns and exploit 
other patterns to further 
reduce the sizes of formula graphs. 
For example, one extended pattern derived from \rr could be 
that the referenced ranges in the formula cells 
of every other row follow the \rr pattern (denoted as \rrGapOne).  
We have tested its prevalence 
and found it is much less prevalent than its \rr counterpart. 
Specifically, \rrGapOne reduces the number of edges by 195k 
and 275k for Enron and Github datasets, respectively, while 
the number of edged reduced by \rr is 17.4M and 141.9M, respectively 
(Details in Section~\ref{sec:exp_graph_size}).
Also, it is possible to exploit other information, 
such as a column of formulae having the same functions, 
to better compress formula graphs. 
Fully exploiting these patterns and information is left to future work.}

\comment{
\subsection{\rrGap}
Another type of new pattern we discovered is 
that formula cells in a column 
repeat the \rr pattern for every few rows or columns. 
This pattern is often employed  
when a user has a small template with formulae 
(e.g., a set of formulae calculating inventory for a day) 
and repeatedly uses this template across rows/columns.
We illustrate this pattern with gap 1, \rrGapOne. 
An example is shown in \pFigure~\ref{fig:rrGapOne}. 
We can directly use the $\compress(e, e^\prime)$ function 
from \rr to perform compression 
because Algorithm~\ref{alg:compression} guarantees 
the gap size constraint. 
To find the \sdeps of an input range $r$ in a compressed edge $e$, 
we first use the algorithm for \rr to find the output range $d$ 
and then find the valid cells in $d$ that respect the gap $g=1$. 
For example, the compressed edge will return \code{(B3:B5)} 
for the input \code{A3}. But we will only return 
\code{B3} and \code{B5} since \code{B4} does not depend on \code{A3}.
Clearing a range $s$ in $e$ follows a similar logic 
to $\fclear(e, s)$ of \rr. 
The only difference is that when we remove $s$ from $e.\pdep$ and 
generate one or two new ranges, 
each generated range $\varname{newDep}$ needs to be corrected to guarantee 
that both its head and tail cell are valid in $e$. 
For example, if we clear \code{B5} from 
\code{(B3:B5)} in \pFigure~\ref{fig:rrGapOne}, 
$\varname{newDep}$ should be \code{B3} rather than \code{(B3:B4)}. 
}

\section{Experiments}
\label{sec:exp}

\begin{table}
\fontsize{6.50}{8.5}\selectfont
\centering
\begin{tabular}{lcccc}
\hline
\multicolumn{1}{c}{\multirow{2}{*}{}} & \multicolumn{2}{c}{Enron} & \multicolumn{2}{c}{Github} \\ \cline{2-5} 
\multicolumn{1}{c}{}                  & Vertices      & Edges     & Vertices      & Edges      \\ \hline
\nocomp                        & 18.6M         & 23.7M     & 165.8M        & 179.8M     \\
TACO-InRow                            & 7.7M  (41.2\%)     & 12.5M (52.8\%)     & 55.2M (33.3\%)         & 55.2M (30.7\%)      \\
TACO-Full                             & 1.2M (6.3\%)          & 1.2M (5.0\%)      & 4.2M (2.5\%)          & 3.5M (1.9\%)       \\ \hline
\end{tabular}
\caption{\small Graph sizes after \sys compression (lower is better)}
\label{tbl:total_graph_size}
\vspace{-3mm}
\end{table}

\begin{table}
\fontsize{6.5}{8.5}\selectfont
\centering
\begin{tabular}{llllll}
\toprule
                     &        &       Max & 75th per. & Median &   Mean \\
\midrule
  \multirow{2}{*}{Enron}    & TACO-InRow &   142,396 &  18,196 &  12,489 &  18,876 \\
                            &  TACO-Full &   700,155 &  37,286 &  18,380 &  37,963 \\
  \multirow{2}{*}{Github}   & TACO-InRow & 1,693,698 &  42,728 &  19,704 & 45,303 \\
                            &  TACO-Full & 3,139,011 &  75,553 &  31,608 & 78,633 \\
\bottomrule
\end{tabular}
\caption{\small The num. of edges reduced by \sys (higher is better)}
\label{tbl:total_edge_reduction}
\vspace{-7mm}
\end{table}

Our experiments address the following research questions:
\begin{tightitemize}
    \item How much do \sys's predefined patterns reduce formula graph sizes for real-world spreadsheets?
    (\pSection~\ref{sec:exp_graph_size})
    
    \item How much time does \sys take to build, query, and maintain a formula graph compared to \nocomp, an approach specialized for formula graph compression\rone{, and a baseline that implements the formula graph in a graph database}? (\pSection~\ref{sec:exp_nocomp} and \pSection~\ref{sec:exp_antifreeze})
    
    \item How much faster does \sys query a formula graph compared to a commercial spreadsheet system \rone{and a baseline from an open-source spreadsheet system}?
    (\pSection~\ref{sec:exp_commercial})
    
\end{tightitemize}

\subsection{Prototype, Benchmark, and Configurations}
\label{sec:benchmark_config}

\stitle{Prototype}
\sys is implemented as a Java library. 
It takes an xls or xlsx file as input, 
leverages the POI library~\cite{poi} to parse it, 
and builds a compressed formula graph 
for the parsed dependencies.\ptr{The compressed formula graph is 
implemented using an adjacency list. 
We build an R-Tree~\cite{r-tree} 
on the vertices of the formula graph  
to quickly find vertices that overlap with a given range.}
\sys provides interfaces of finding \sdeps or \sprecs of a range, 
and adding or deleting a dependency. 
\sys is integrated into 
\ds~\cite{demo1, datamodels, dataspread-demo2}, 
an open-source spreadsheet system. 
\ds returns control to users after it has identified all of the 
\sdeps of an update and hides them; 
so finding \sdeps of an update is 
the bottleneck for returning control to users. 
In \ds, a formula graph is used to find the \sdeps of an update 
and \sys acts as a drop-in replacement for this formula graph. 
\rthree{\sys can also be integrated into other spreadsheet systems, 
such as LibreCalc or MS Excel, 
to accelerate updating spreadsheets  
since these spreadsheets system similarly adopt 
formula graphs to track formula dependencies~\cite{calc-dep-graph, excel-dep-graph}. 
In addition, \sys can be used by third-party tools to analyze 
and trace formula dependencies. 
We have additionally implemented a plug-in using \sys to 
help users efficiently trace 
formula dependencies in Excel~\cite{taco-lens}.}


\stitle{Benchmark}
Our tests are based on two real-world spreadsheet datasets. 
The first one is the Enron dataset~\cite{Enron} 
with 17K xls files. 
We focus on the large spreadsheets 
(i.e., with no less than 10K dependencies) 
that do not cause exceptions (e.g., those requiring passwords), 
and are left with 593 xls files. 
Since the Enron dataset includes only xls files, 
we further crawl 7.8K xlsx files from Github
that are larger than 10 KB\footnote{Xlsx files, unlike xls files, 
support larger spreadsheets (e.g., the row limits for xlsx 
and xls files are 1M and 66K, respectively.)}. 
We focus on large spreadsheets and skip the erroneous ones, 
and get 2,238 xlsx files. 
In total, we test 2,831 files.


\begin{table}[!t]
\fontsize{6.5}{8.5}\selectfont
\centering
\begin{tabular}{llllll}
\toprule
                     &        &     Min & 25th per. & Median &   Mean \\
\midrule
  \multirow{2}{*}{Enron}    & TACO-InRow & 0.0042\% &    6.32\% & 39.81\% & 42.27\% \\
                            &  TACO-Full & 0.0042\% &     0.47\% & 1.93\% & 7.37\% \\
  \multirow{2}{*}{Github}   & TACO-InRow & 0.0005\% &     0.10\% & 17.45\% & 36.48\% \\
                            &  TACO-Full & 0.0005\% &     0.03\% &  0.19\% & 3.44\% \\
\bottomrule
\end{tabular}
\caption{\small Remaining edges after compression (lower is better)}
\label{tbl:total_edge_fraction}
\vspace{-3mm}
\end{table}

\begin{table}[!t]
\fontsize{6.5}{8.5}\selectfont
\centering
\begin{tabular}{lllll}
\toprule
 Pattern & Enron Total & Enron Max & Github Total & Github Max \\
\midrule
      RR &  17,412,246 &   525,026 &  141,876,182 &  2,094,936 \\
      RF &       1,880 &     1,413 &       13,361 &      9,999 \\
      FR &     150,845 &    13,815 &      178,609 &     39,008 \\
      FF &   3,844,351 &   174,948 &   24,784,621 &  1,043,702 \\
RR-Chain &     566,348 &    24,596 &    5,867,728 &    399,996 \\
\bottomrule
\end{tabular}
\caption{\small Num. of edges reduced by each pattern (higher is better)}
\label{tbl:pattern_edge_reduction}
\vspace{-7mm}
\end{table}

\stitle{Configurations} 
Unless otherwise specified, 
the experiments are run on a t2.2xlarge	instance 
from AWS EC2, which has 32 GB memory and 8 vCPUs, 
and uses Ubuntu 22.04 as the OS.
We use a single thread, and run each test three times 
and report the average number. 
We configure the POI library to load spreadsheets by columns. 

\begin{figure*}[t]
  \begin{tabular}{cccc}
      \begin{minipage}[b]{0.24\linewidth}
        \centering
        \includegraphics[width=\linewidth]{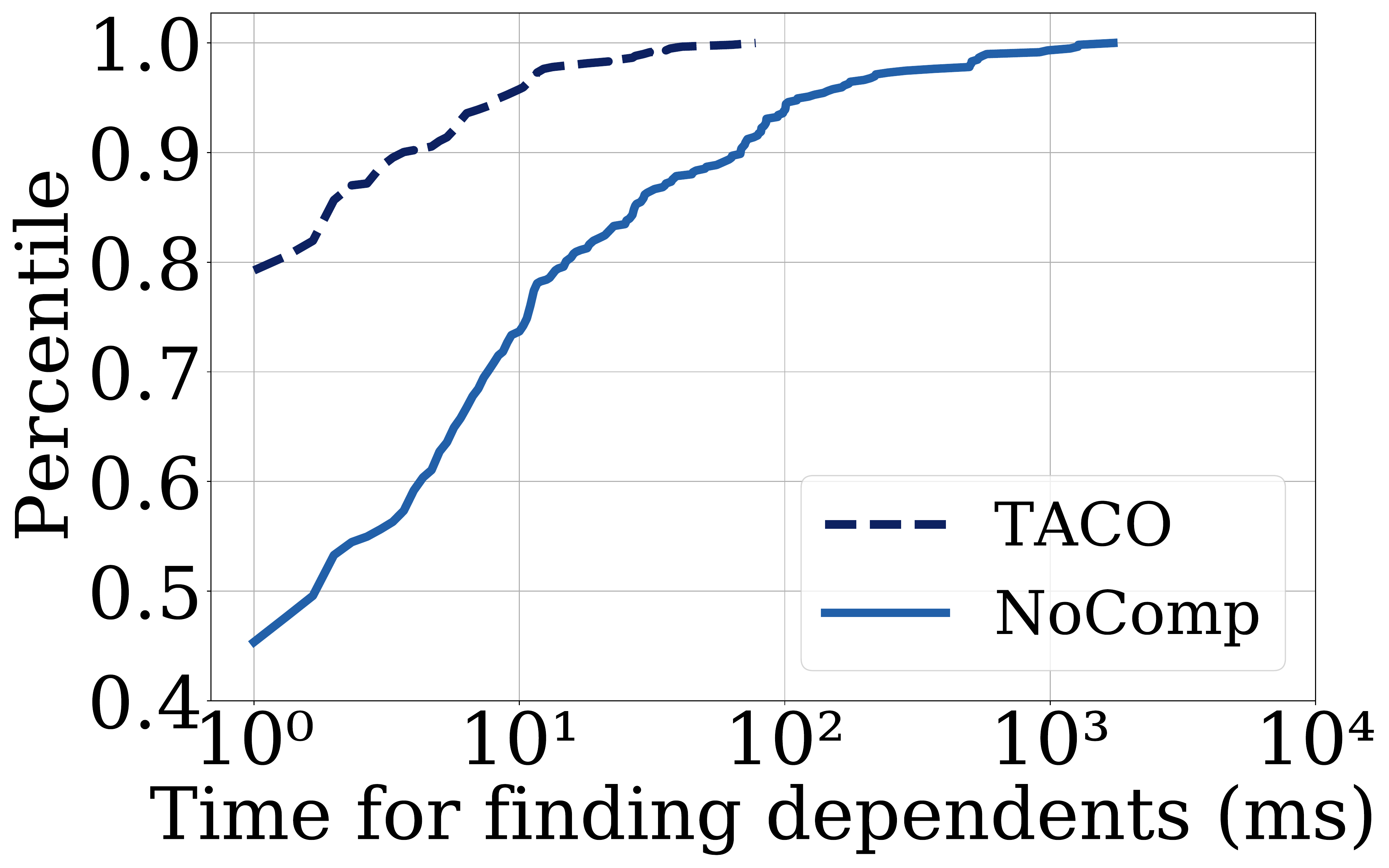}
        \subcaption{\footnotesize \maxDep (Enron)}
        \label{fig:exp_lookup_max_enron}
      \end{minipage} 

      \begin{minipage}[b]{0.24\linewidth}
        \centering
        \includegraphics[width=\linewidth]{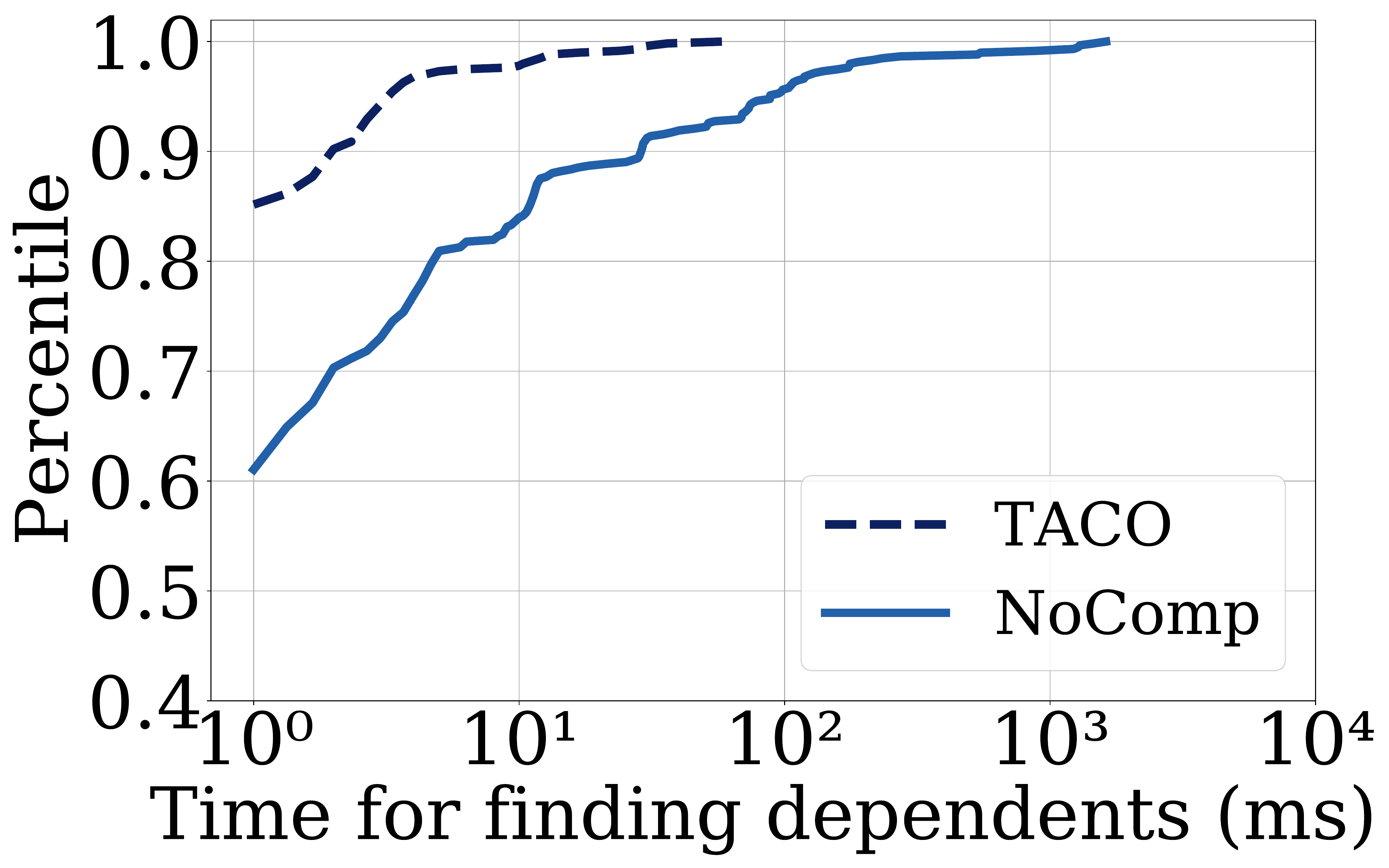}
        \subcaption{\footnotesize \longPath (Enron)}
        \label{fig:exp_lookup_long_enron}
      \end{minipage}
      
      \begin{minipage}[b]{0.24\linewidth}
        \centering
        \includegraphics[width=\linewidth]{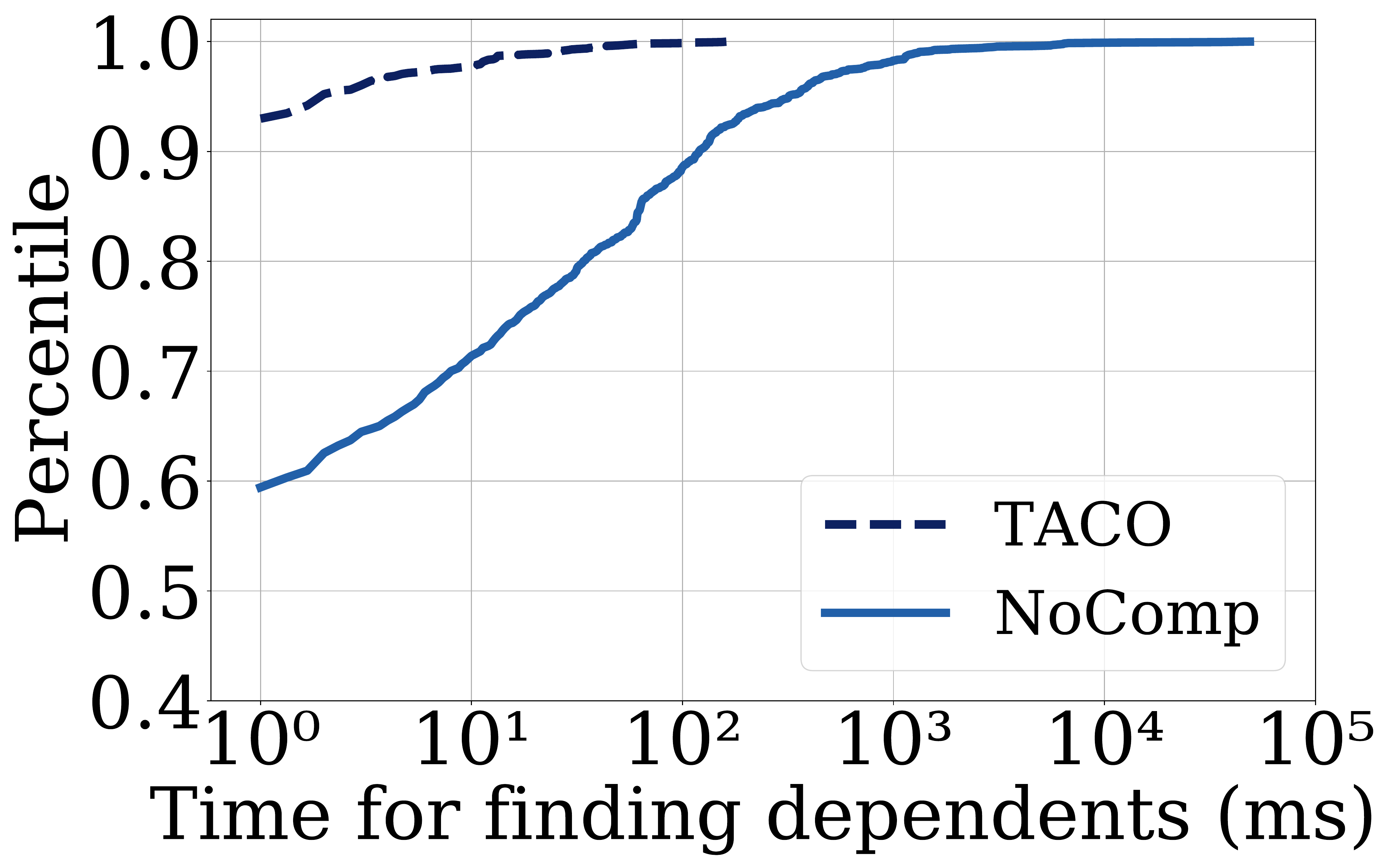}
        \subcaption{\footnotesize \maxDep (Github)}
        \label{fig:exp_lookup_max_github}
      \end{minipage}
      
      \begin{minipage}[b]{0.24\linewidth}
        \centering
        \includegraphics[width=\linewidth]{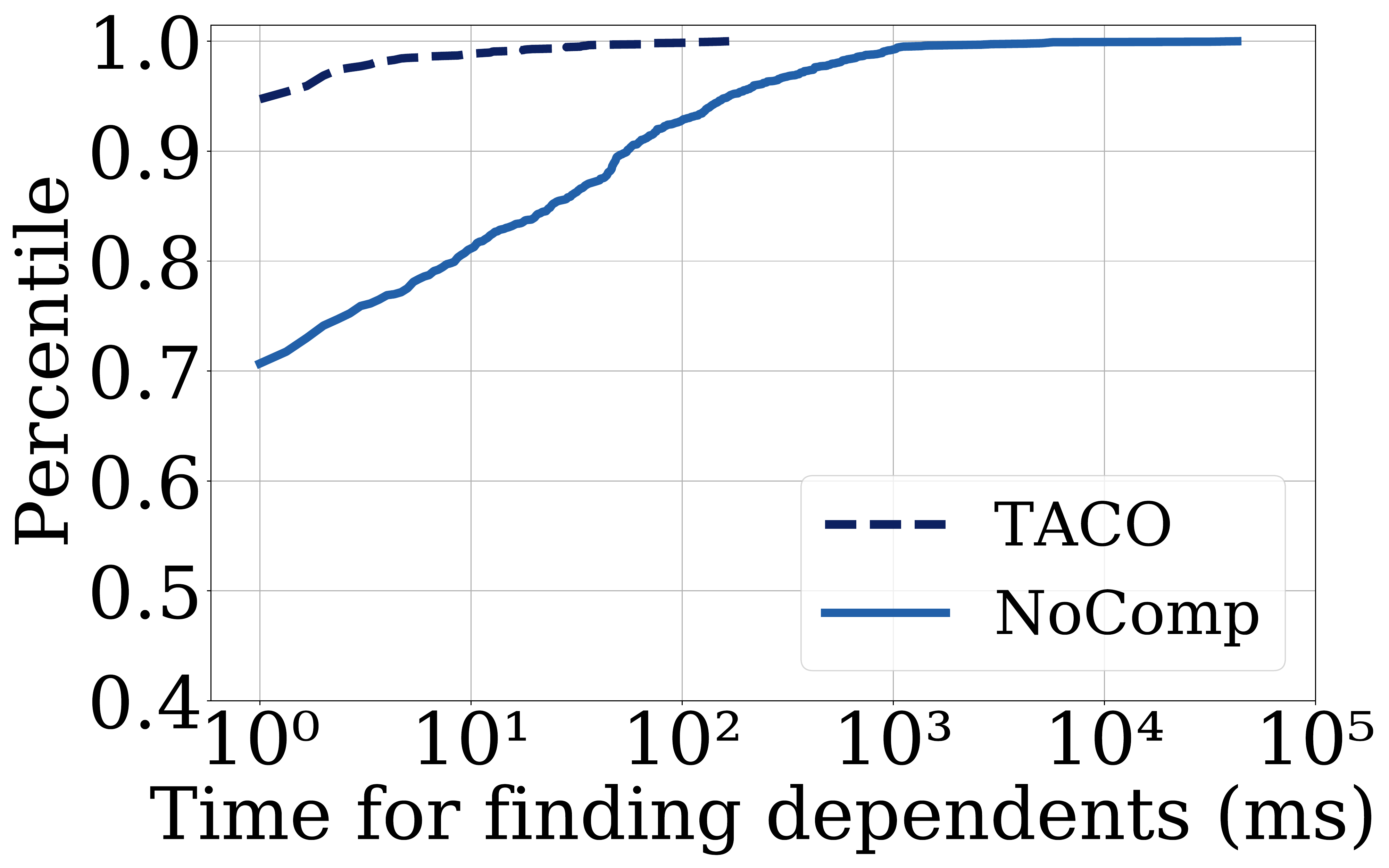}
        \subcaption{\footnotesize \longPath (Github)}
        \label{fig:exp_lookup_long_github}
      \end{minipage}
  \end{tabular}
  \vspace{-3mm}
  \caption{\small CDFs for the time for finding \sdeps}
  \label{fig:exp_lookup}
  \vspace{-4mm}
\end{figure*}

\begin{figure*}[!t]
\centering
\begin{tabular}{cc}
   \begin{minipage}{0.48\textwidth}
    \begin{tabular}{cc}
      \begin{minipage}[b]{0.5\linewidth}
        \centering
        \includegraphics[width=\linewidth]{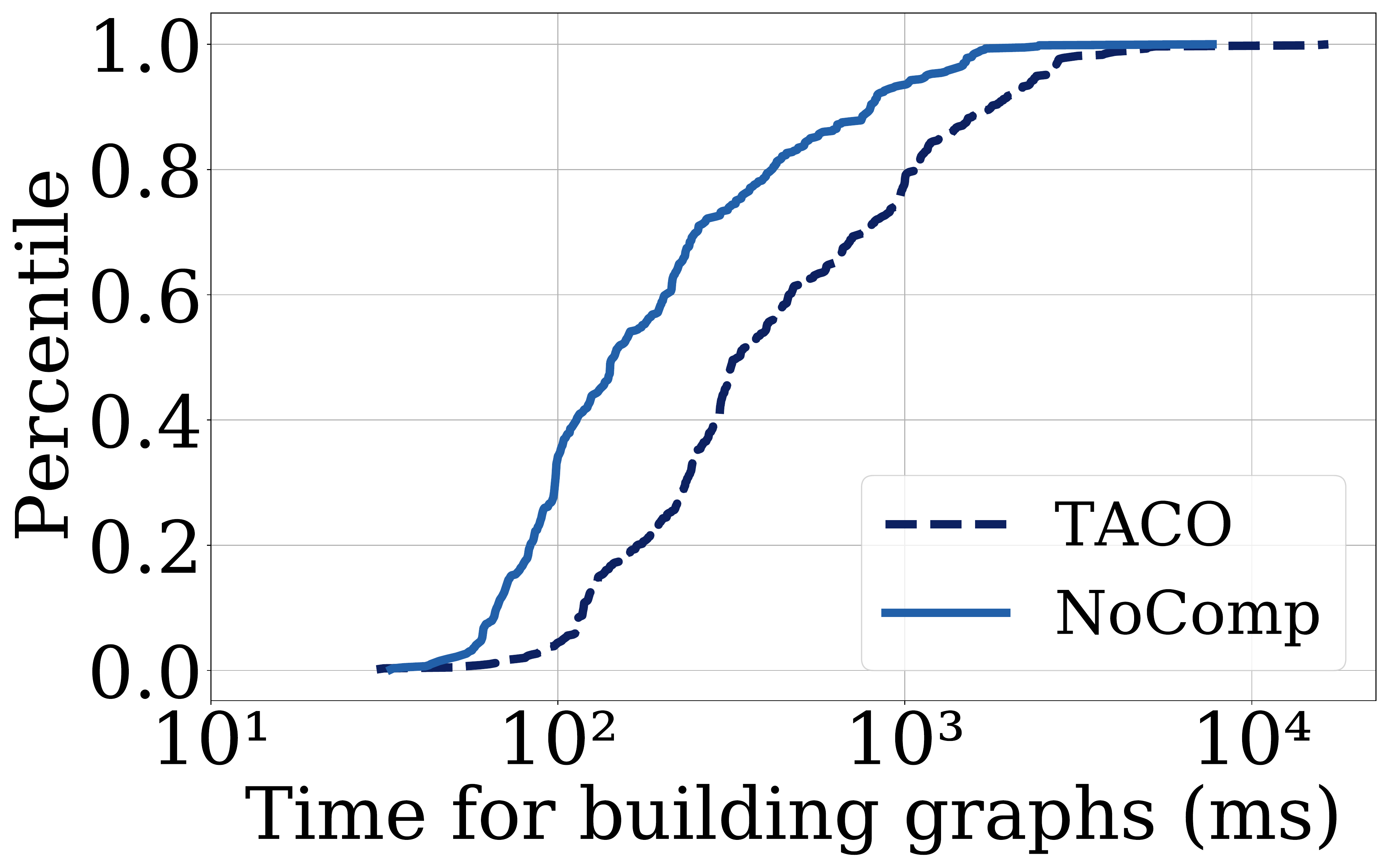}
        \subcaption{\footnotesize Enron}
        \label{fig:exp_build_enron}
      \end{minipage}
      \begin{minipage}[b]{0.5\linewidth}
        \centering
        \includegraphics[width=\linewidth]{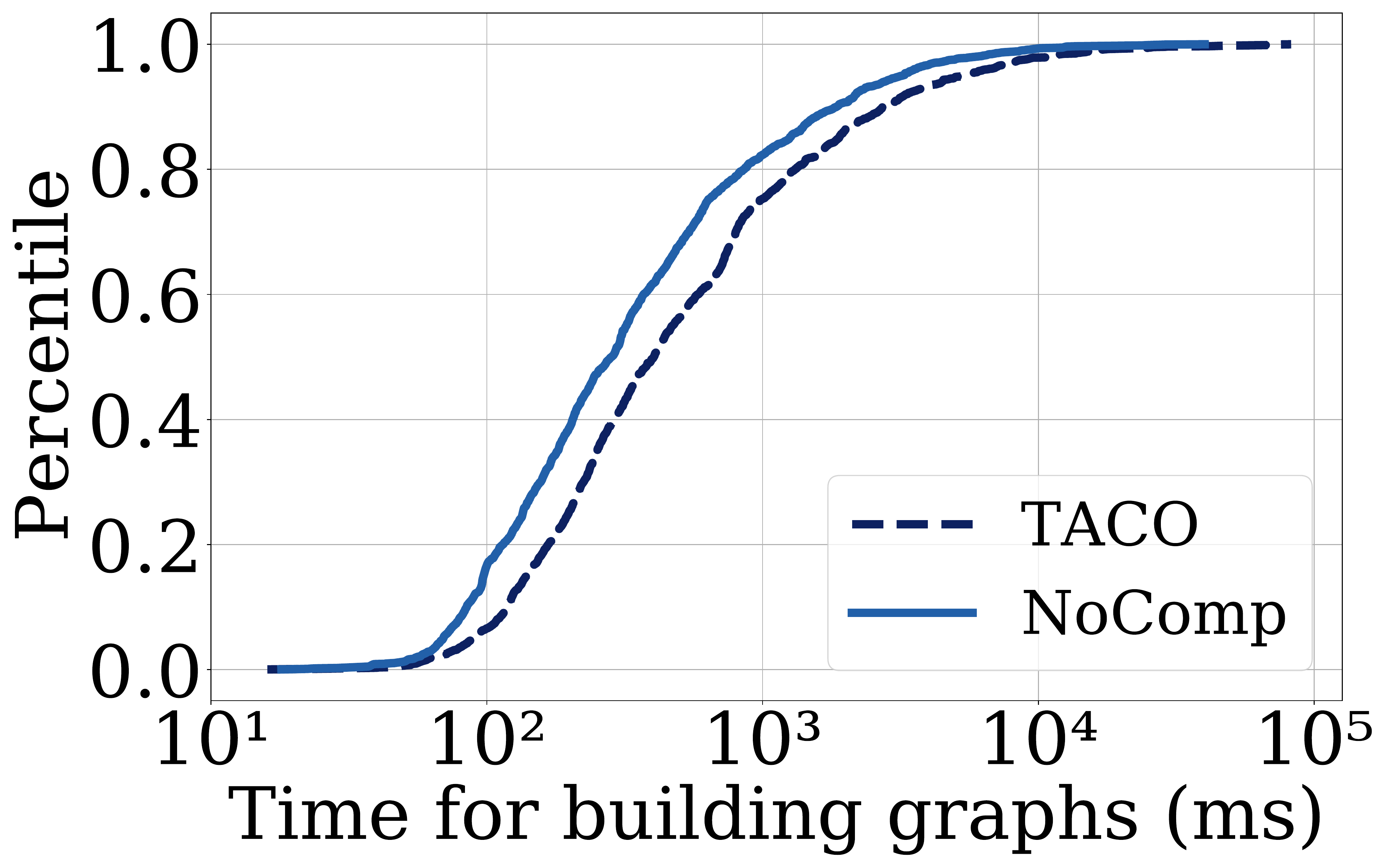}
        \subcaption{\footnotesize Github}
        \label{fig:exp_build_github}
      \end{minipage} 
    \end{tabular}
    \vspace{-3mm}
    \caption{\small CDFs for the time for building formula graphs}
  \label{fig:exp_build}
  \vspace{-7mm}
  \end{minipage}

   \begin{minipage}{0.48\textwidth}
    \begin{tabular}{cc}
      \begin{minipage}[b]{0.5\linewidth}
        \centering
        \includegraphics[width=\linewidth]{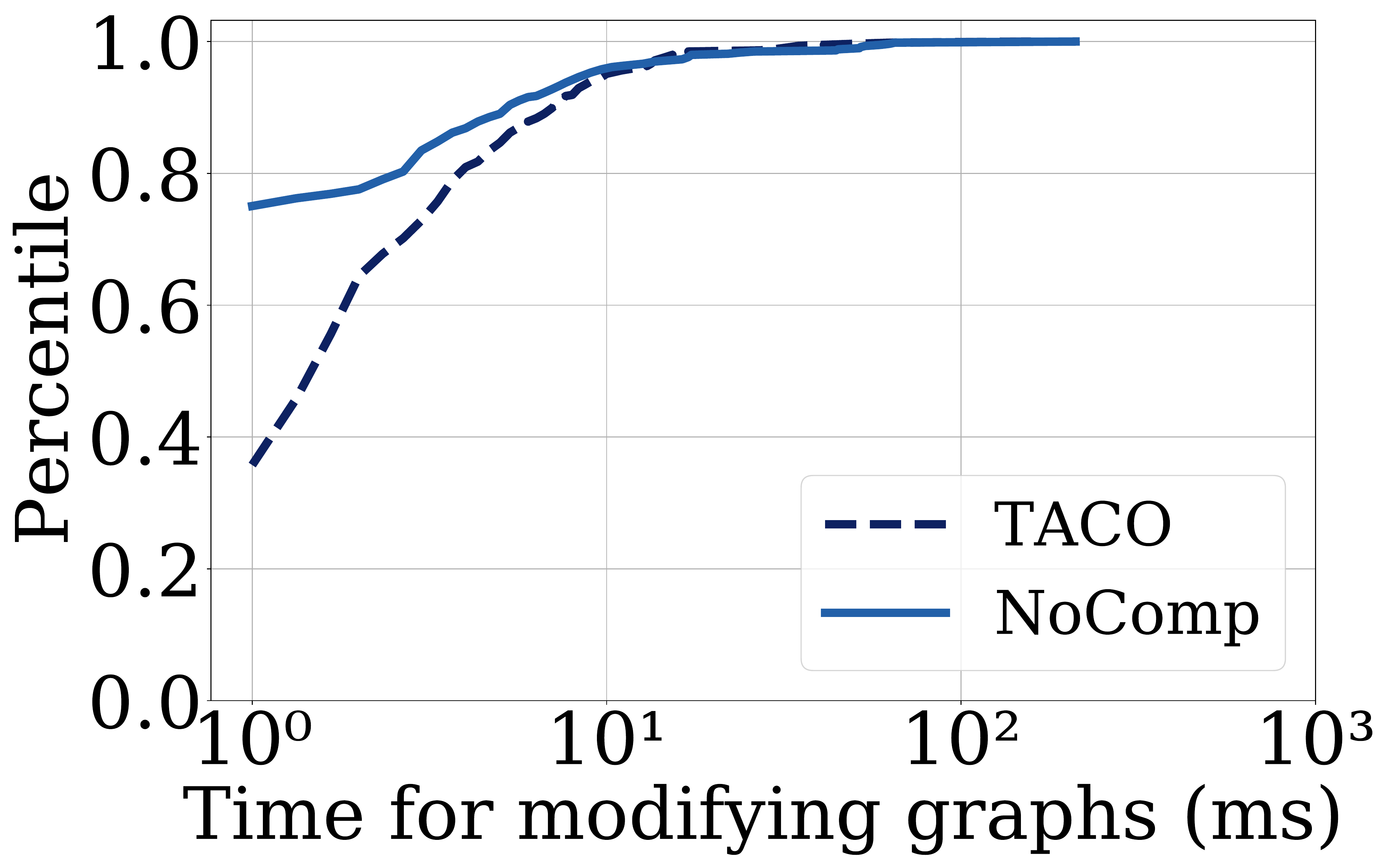}
        \subcaption{\footnotesize Enron}
        \label{fig:exp_modify_enron}
      \end{minipage}
      \begin{minipage}[b]{0.5\linewidth}
        \centering
        \includegraphics[width=\linewidth]{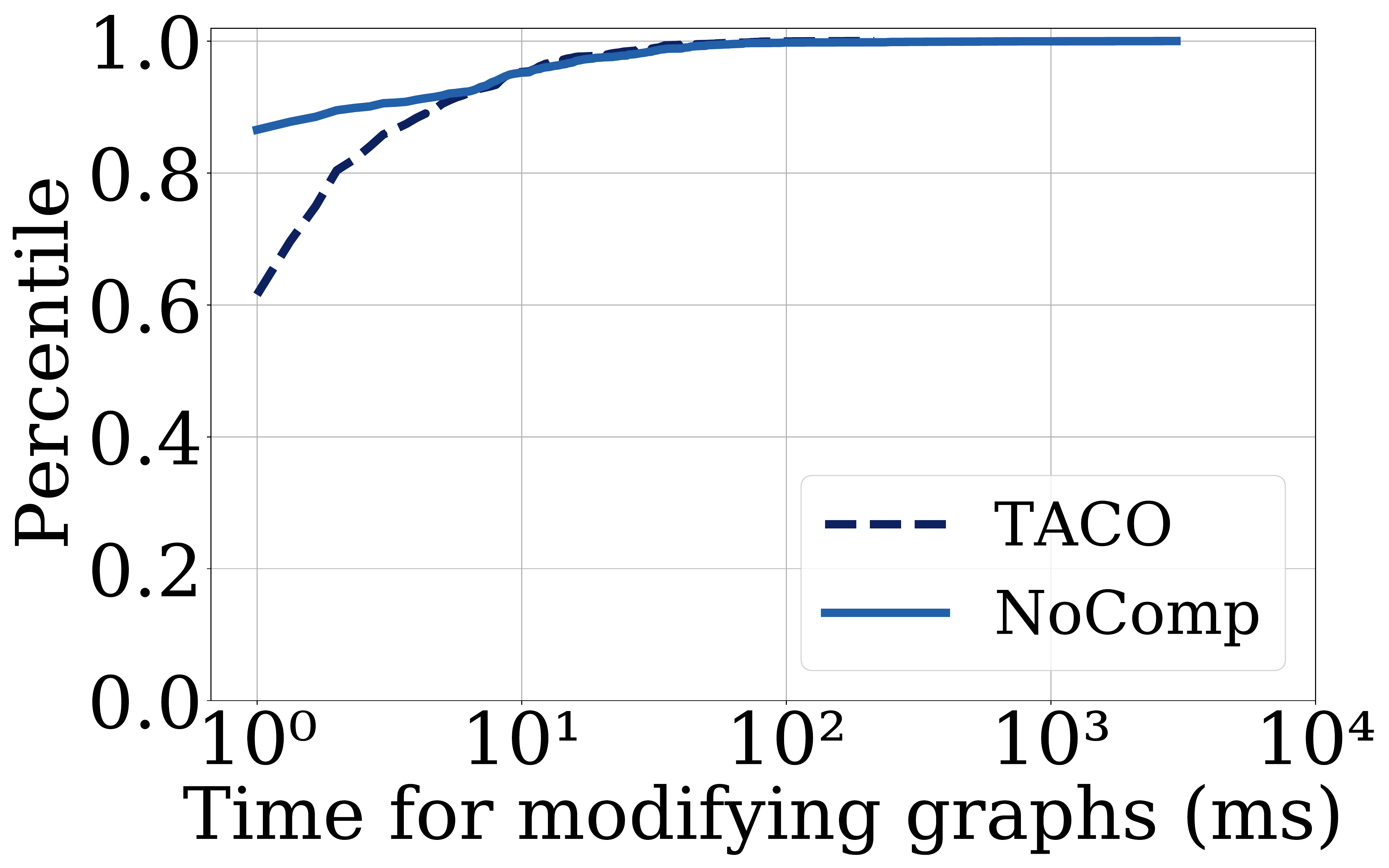}
        \subcaption{\footnotesize Github}
        \label{fig:exp_modify_github}
      \end{minipage} 
    \end{tabular}
    \vspace{-3mm}
    \caption{\small CDFs for the time for modifying formula graphs}
  \label{fig:exp_modify}
  \vspace{-4mm}
  \end{minipage}

\end{tabular}
\end{figure*} 

\subsection{Compressed Formula Graph Sizes}
\label{sec:exp_graph_size}

We first test the effectiveness of \sys in reducing the graph sizes. 
We test two variants: \inrow and \full. 
\inrow only compresses 
adjacent column formulae that reference ranges in the same row, 
using \rr to perform the compression. 
This approach 
captures the pattern of derived columns, 
where a subset of columns are computed using the remaining, 
which is common in data science and feature engineering (e.g., storing normalized versions of values in a column as a new column, or extracting substrings of an existing column and storing them as a new column).
\full considers any formulae and adopts all predefined patterns.

\stitle{Overall effectiveness of reducing graph sizes}
We first report the total number of vertices and edges 
of the compressed and uncompressed formula graphs across all the files 
in Enron and Github, respectively, in Table~\ref{tbl:total_graph_size}. 
The uncompressed graphs are built using \nocomp, as discussed in \pSection~\ref{sec:complexity}. 
Both \inrow and \full significantly reduce 
the total number of vertices and edges compared to \nocomp. 
For example, \full reduces the number of edges 
in Github from 179.8M to 3.5M. 
In addition, \full has much smaller graph sizes 
than \inrow (e.g., 3.5M vs. 55.2M edges for Github), 
which shows that many formulae reference 
different rows and \full can efficiently 
compress these complex cases that \inrow does not consider. 

To further understand the effectiveness of \sys, 
we compute two additional metrics for each spreadsheet's uncompressed formula graph 
$G^\prime(E^\prime, V^\prime)$ and compressed formula graph $G(E, V)$: 
the number of edges reduced by the \sys (i.e., $|E^\prime| - |E|$) 
and the fraction of the number of compressed edges 
compared to the uncompressed ones 
(i.e., $\frac{|E|}{|E^\prime|}$). 
Table~\ref{tbl:total_edge_reduction} reports the max, 75th percentile, 
median, and mean value of the number of reduced edges 
across all files for both datasets. 
Table~\ref{tbl:total_edge_fraction} reports the min, 25th percentile, 
median, and mean value of the edge fraction after compression.

Table~\ref{tbl:total_edge_reduction} shows that 
\full can reduce the number of edges by {\bf up to 700K and 3.1M 
in a single spreadsheet} for Enron and Github, respectively. 
The average edge reduction by \full is 
38K and 79K for the two datasets. 
Table~\ref{tbl:total_edge_fraction} shows that 
the average edge fractions after compression by \full 
are \textbf{as low as 7.4\% and 3.4\%} for Enron and Github, respectively. 
These results show that \sys can effectively reduce formula graph sizes 
of real spreadsheets. 

\stitle{Effectiveness of \sys patterns}
Next, we evaluate the effectiveness of each \sys pattern 
in reducing the number of edges. 
Recall that a partition of edges $E^{\prime}_i$ 
in the original uncompressed graph $G^\prime(E^\prime, V^\prime)$ 
corresponds to one compressed edge $e_i$ in $G(E, V)$ in \sys. 
So the number of reduced edges of a pattern $p$ 
in $G$ is computed as: $\sum_{e_i \in E}[e_i.meta.p=p](|E^{\prime}_i| - 1)$, 
where $[e_i.meta.p=p]$ considers the compressed edges for the pattern $p$, 
$E^{\prime}_i$ is the set of edges that are compressed into $e_i$, 
and $|E^{\prime}_i| - 1$ is the number of reduced edges by $e_i$. 
We compute the above metric for each pattern, 
and report the total and maximum number of reduced 
edges across the tested spreadsheets. 

The results in Table~\ref{tbl:pattern_edge_reduction} 
show that \rr and \ff compress the most edges. 
The number of edges reduced by \rr 
is more than 17.4M and 141.9M for Enron and Github, respectively. 
\ff reduces around 3.8M and 24.8M edges in total 
for the two respective datasets. 
Other patterns also reduce a significant number of edges 
in some spreadsheets. 
For example, in the Github dataset 
\rrChain reduces the number of edges up to 
around 400K for a single spreadsheet. 
\fr and \rf, while not as common, 
can reduce up to around 39K and 
10K edges for a single spreadsheet, respectively. 
These results show that \sys's patterns are prevalent 
in real spreadsheets and can significantly reduce graph sizes.

\begin{figure*}[t]
  \begin{tabular}{cc}
      \begin{minipage}[b]{0.48\linewidth}
              \centering
        \includegraphics[width=\linewidth]{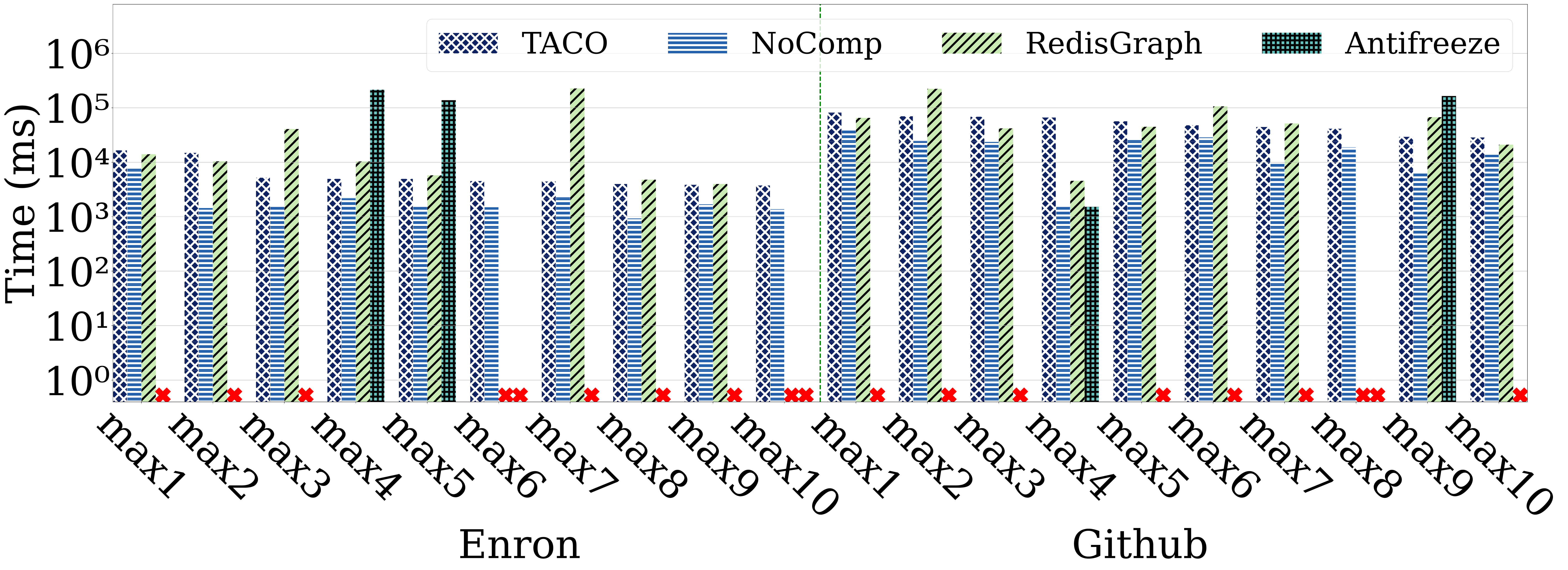}
        \vspace{-7mm}
        \caption{\footnotesize \rone{Latency on building graphs (\anti and \redis)}}
        \label{fig:exp_antifreeze_build}
      \end{minipage} 
      \begin{minipage}[b]{0.48\linewidth}
        \centering
        \includegraphics[width=\linewidth]{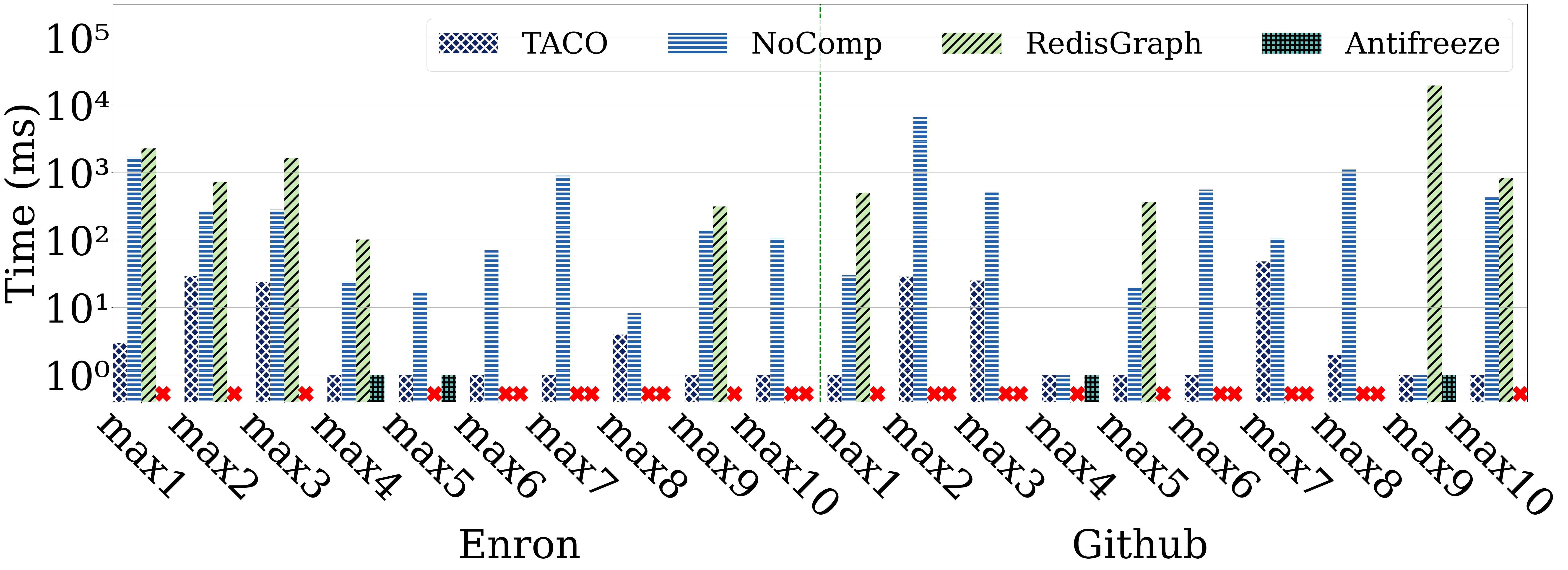}
        \vspace{-7mm}
        \caption{\footnotesize \rone{Latency on finding \sdeps (\anti and \redis)}}
        \label{fig:exp_antifreeze_lookup}
      \end{minipage} \\
      
      \begin{minipage}[b]{0.48\linewidth}
        \centering
        \includegraphics[width=\linewidth]{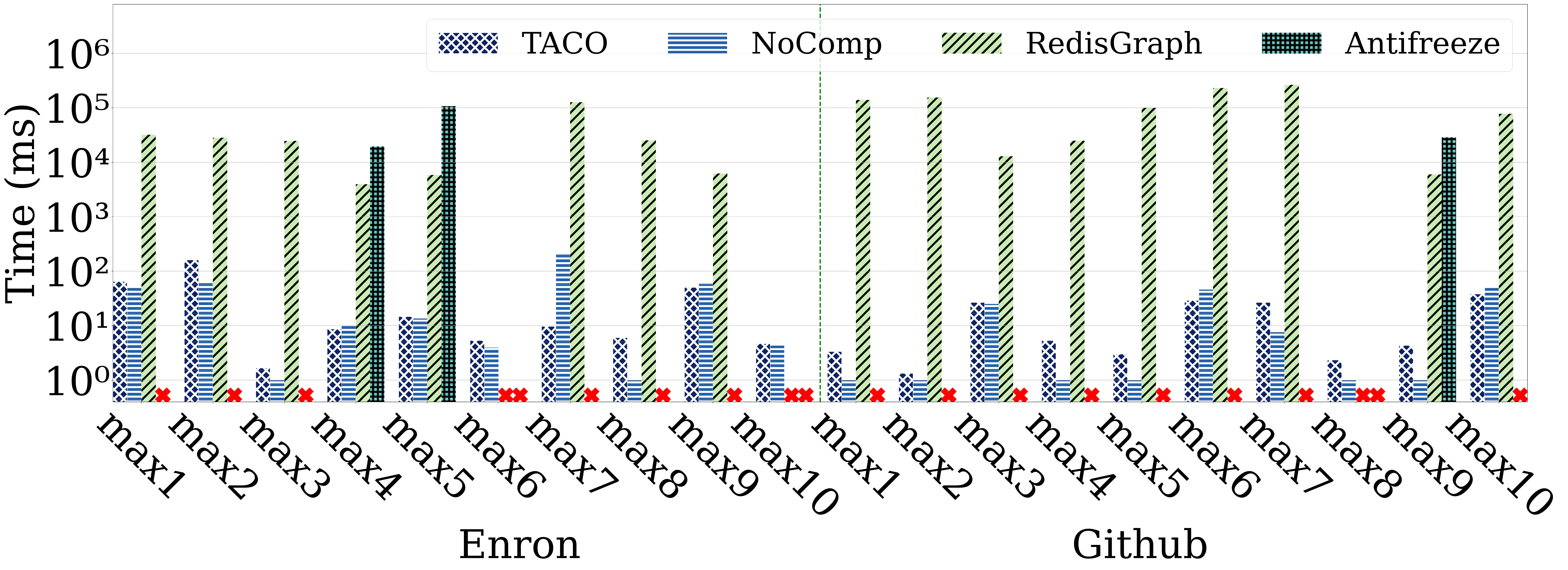}
        \vspace{-7mm}
        \caption{\footnotesize \rone{Latency on modifying graphs (\anti and \redis)}}
        \label{fig:exp_antifreeze_modify}
      \end{minipage}
      
      \begin{minipage}[b]{0.48\linewidth}
        \centering
        \includegraphics[width=\linewidth]{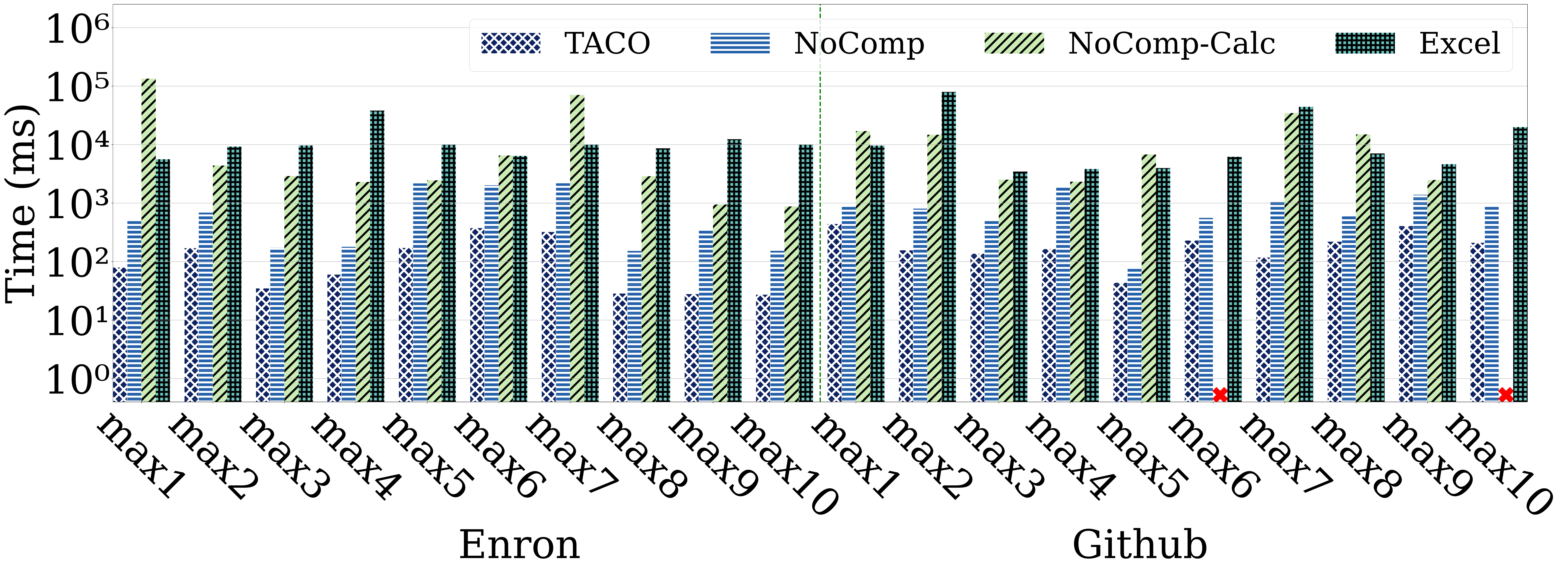}
        \vspace{-7mm}
        \caption{\footnotesize \rone{Latency on finding \sdeps (Excel and \calc)}}
        \label{fig:exp_excel}
      \end{minipage}

  \end{tabular}
  \vspace{-7mm}
\end{figure*}

\subsection{Performance Comparison with \nocomp}
\label{sec:exp_nocomp} 
We now compare the performance of \sys and \nocomp, 
including the time for finding \sdeps, building formula graphs, 
and modifying formula graphs. 
We focus on finding \sdeps because 
finding \sprecs is the dual problem. 

\stitle{Finding \sdeps}
For each spreadsheet we test two cases: 
finding \sdeps for the cell that has the maximum number of 
\sdeps (denoted as the \maxDep case) and the cell that has 
the longest path (denoted as the \longPath case) 
in the uncompressed graph. 
\rone{Recall that we consider both direct and 
indirect \sdeps of a cell for finding \sdeps.}
\pFigure~\ref{fig:exp_lookup} reports the CDFs for the time of 
finding \sdeps for the two cases in the two datasets. 
We see that \sys has much smaller execution time 
for finding \sdeps than \nocomp. 
For Enron and Github datasets, 
\sys's maximum execution time for finding \sdeps 
is 78~ms and 167~ms, respectively, 
while \nocomp's maximum execution time is 
1,730~ms and 48,889~ms, respectively. 
Across all of the tested spreadsheets, 
\textbf{the speedup of \sys over \nocomp is up to 34,972$\times$}. 

\stitle{Building and modifying formula graphs}
We also test the time for building and 
modifying formula graphs for the two datasets. 
To modify a formula graph of a spreadsheet, we remove the content 
of a column of 1K cells starting from the cell that has the 
most \sdeps. 

\pFigure~\ref{fig:exp_build} reports the CDFs for 
the time of building formula graphs in two datasets. 
We see \sys takes more time to build the formula graphs 
compared to \nocomp due to the compression overhead. 
For Enron, the longest time for building a formula graph 
for \sys and \nocomp is 16,626~ms and 7,704~ms, respectively. 
For Github, this number for \sys and \nocomp 
is 82,567~ms and 40,103~ms, respectively. 
We believe this overhead is acceptable because  
building formula graphs only happens once 
when we load the spreadsheet, 
and it can be executed in the background asynchronously 
and will not be on the critical path of users interacting with the system. 
\pFigure~\ref{fig:exp_modify} reports the CDFs 
for the time taken to modify formula graphs. 
We see that for the easy cases composing the first 90\% 
with less than 10~ms, 
\sys takes more time to modify formula graphs than \nocomp. 
For the harder cases, \sys takes less time than \nocomp, 
which is consistent with our complexity analysis. 
For example, for Github, the 99th percentiles for \sys and  
\nocomp are 33~ms and 41~ms, respectively.

\subsection{Performance Comparison with \anti and \redis}
\label{sec:exp_antifreeze}
We now compare \sys with an approach that 
involves compressing formula graphs, \anti~\cite{Antifreeze},\footnote{Note that graph compression is one of the Antifreeze paper’s contributions; its main focus is on the asynchronous execution model, metric, and interface.}
\rone{and another approach that instantiates formula graphs 
in a graph database, \redis~\cite{redis}, without compression.}
\anti builds an uncompressed formula graph 
for the input dependencies, pre-computes the \sdeps for each cell, 
compresses the \sdeps for each cell via bounding ranges, 
and stores each cell along with the compressed \ldeps 
in a look-up table. 
If formula cells are changed, it modifies the uncompressed graph 
and builds the look-up table from scratch. 
The number of bounding ranges is set to 20,  
as in the original paper~\cite{Antifreeze}. 
\rone{To store formula graphs in \redis, 
we decompose each edge that involves a range 
in the original formula graph 
into multiple edges that only involve cells, 
since \redis and other graph databases do not 
support finding overlapping vertices for an input range. 
For example, an edge $\code{A1:A2} \rightarrow \code{B1}$ 
is decomposed into two edges $\code{A1} \rightarrow \code{B1}$ 
and $\code{A2} \rightarrow \code{B1}$. 
To efficiently load a formula graph in \redis, we 
write these decomposed edges into a CSV file and adopt 
\redis's bulk-load tool~\cite{redis-bulk-load} 
to load this file rather than inserting each edge one by one. 
We use Cypher~\cite{cypher}, a declarative graph query language, 
to query and maintain the graph.}

For this test, we choose the top 10 spreadsheets for which \sys has 
the longest time for building formula graphs from each dataset
\footnote{We have tried the spreadsheets 
where \sys has the longest time 
for finding \sdeps, but \anti cannot finish for any of them.}.
We rename top 10 spreadsheets to $max_i$, 
where $i$ represents the order. 
\rone{If the time for building a formula graph 
is greater than 300 seconds, 
this test is regarded as \textit{did not finish} (DNF). 
For \redis, we mark the test for 
finding dependents as DNF if it cannot finish within 60 seconds 
since we observed that the memory consumption grows quickly 
when finding dependents in \redis. 
This is because \redis does not efficiently 
optimize the declarative Cypher query 
and needs to search one edge multiple times. 
In the experiment figures, we use a red \textbf{X} to 
represent DNF.}

\pFigure~\ref{fig:exp_antifreeze_build}-\ref{fig:exp_antifreeze_modify} 
report the time for finding \sdeps of the cell that has 
the maximum number of \sdeps, 
and the time for building and modifying a formula graph. 
We see that \anti only finishes building the 
compressed formula graph for 4 out of the 20 spreadsheets 
and so its other numbers are not reported (marked 
with \textbf{X}).
For the spreadsheets where \anti can finish the tests, 
\sys has the same execution time for finding \sdeps 
as \anti, and has much smaller execution time for 
building and modifying a formula graph than \anti. 
\rone{\redis cannot finish in many tests, either, 
mainly due to the large graphs that only include 
cell-to-cell edges. 
Among the tests where \redis finishes, 
\sys has much smaller execution time than \redis 
in most of these tests. Specifically, the speedup of \sys over \redis 
on finding dependents is up to 19,555$\times$.}

\subsection{Performance Comparison with Excel and \calc}
\label{sec:exp_commercial}

We now compare \sys's performance of 
finding \sdeps with Excel \rone{and a baseline derived
from OpenOffice Calc~\cite{calc-dep-graph}, denoted \calc}. 
For Excel, we test the VBA API for finding 
the \sdeps of a cell~\cite{excel-vba-dependents}.
\rone{For \calc, we implemented it 
based on a document that describes the design of formula 
graphs in OpenOffice Calc~\cite{calc-dep-graph}. 
Similar to \nocomp, this baseline does not compress dependencies. 
The difference from \nocomp is that \calc does not 
use an R-Tree to find the vertices in formula graphs 
that overlap with an input range. 
Instead, it pre-partitions 
the spreadsheet space into containers, 
stores overlapping ranges in each container, 
and uses containers to find the overlapping vertices.}
We use top 10 spreadsheet files 
for which \sys spends the most time for finding \sdeps 
in each dataset from \pSection~\ref{sec:exp_nocomp}. 
We rename the 10 spreadsheets to $max_i$, where $i$ 
represents the order. 
\rone{In each spreadsheet, we test the time for finding \sdeps 
of the cell that has the maximum number of dependents. 
If a test cannot finish within 300 seconds, it is marked as a red \textbf{X}.}
These experiments are done on a laptop 
that has one Intel Core i5 CPU with 4 physical cores 
and 8 GB of memory, and uses Windows 10 as the OS. 

The results in \pFigure~\ref{fig:exp_excel}  
show that \sys is much faster than Excel in all cases. 
The longest time for finding dependents for \sys 
and Excel is 442~ms and 79,761~ms, respectively. 
The speedup of \sys over Excel is up to 632$\times$ 
(i.e., max4 from Enron). 
It is surprising that Excel takes longer time for finding \sdeps
than \nocomp in all cases. One possible reason is that 
Excel compresses formula graphs to reduce memory consumption, 
which introduces the overhead of decompression 
when the formula graphs are used for finding \sdeps. 
\rthree{We note that since Excel is a complex system, 
it may have the overhead that \sys does not. 
It is also possible that Excel is optimized for 
other scenarios by sacrificing the performance 
of finding dependents.} 
\rone{For \calc, it cannot finish in two cases. 
For the other cases, \sys is much faster than \calc 
and the speedup of \sys over \calc is up to 1,682$\times$.}
\section{Related Work}
\label{sec:related}

\sys is related to formula computation, 
graph compression, column-oriented databases, and scalable spreadsheets. 

\stitle{Formula computation in spreadsheets}
There has been some work on improving the interactivity 
of spreadsheets during updates. 
Excel~\cite{excel-dep-graph} and 
other spreadsheet systems~\cite{sestoft2014spreadsheet, zk_spreadsheet, calc-dep-graph} 
track \sdeps of formula cells 
to quickly identify the cells impacted 
by an update and recalculate them. 
\ds approaches this problem using asynchronous execution
~\cite{marcus2011crowdsourced, parameswaran2012deco, brin1998anatomy}. 
It uses the formula graph to identify the impacted formula cells and mark them dirty, 
return control to users immediately, 
and calculate the dirty cells asynchronously~\cite{Antifreeze}. 
Unlike \sys, none of these approaches leverage tabular locality to 
compress formula graphs. 
In addition, \sys is orthogonal to the execution models 
and can be integrated into an existing spreadsheet system 
to improve interactivity. 

\stitle{Graph compression}
Graph compression has been studied in many scenarios, 
such as in the Web~\cite{webgraph} and social networks~\cite{socialGraphCompression}. 
A recent survey~\cite{liu2018graph} shows that
different compression methods are designed for different goals,  
including understanding the structure of a graph~\cite{cilibrasi2005clustering}, 
reducing graph sizes with bounded errors~\cite{navlakha2008graph}, or accelerating queries on graphs~\cite{maccioni2016scalable}. 
None of these papers leverage tabular locality 
and consider the spatial nature of formula graphs. 
In addition, most of them do not support directly 
querying the compressed graph.
Fan et al.~\cite{queryPreserving} support directly 
executing reachability and pattern matching queries 
on a compressed graph, but do not leverage tabular locality 
and support finding \sdeps/\sprecs.
While a recent paper proposes a compressed graph for spreadsheets~\cite{Antifreeze}, 
we showed that building such a compressed graph 
is time-consuming.  

\stitle{Column-oriented databases}
Column-oriented databases~\cite{CStore, MonetDB} employ lightweight compression 
for data in each column and execute queries directly on the 
compressed data without decompression~\cite{ColumnCompression}. 
\sys is instead designed to compress dependencies (i.e., edges) 
while column-oriented compression 
methods are used to compress columnar data. 
In addition, column-oriented databases do not consider 
decomposing complex patterns in a column of formulae into predefined patterns 
as \sys proposes. 

\stitle{Spreadsheets at scale}
Many prior papers focus on supporting 
large-scale data analysis on spreadsheets
~\cite{1010data,raman1999scalable, airtable, demo1, datamodels, witkowski_advanced_2005, witkowski_query_2005, Mondrian}. 
\ds~\cite{demo1, datamodels} adopts databases as 
a scalable back-end. 
ABC~\cite{raman1999scalable} provides a spreadsheet interface 
and uses approximate query processing techniques, 
such as online aggregation~\cite{raman1999online}, 
to quickly return results. 
Mondrian~\cite{Mondrian} maps 
spreadsheets to visual images to detect 
different regions in spreadsheets 
and extract layout templates, but 
does not consider formula dependency compression. 
\sys is different from these papers 
because it approaches the scalability 
problem using compression techniques. 

\vspace{-1mm}
\section{Conclusion}
We presented \sys---a framework that efficiently compresses 
formula graphs in spreadsheets to improve interactivity.
\sys exploits tabular locality, 
wherein cells close to each other have formulae with similar structures, 
and represents tabular locality via 
four basic and one extended patterns. 
As part of \sys, we introduce algorithms 
for building the compressed formula graph based on predefined patterns, 
querying this graph without decompression, 
and incremental maintenance. 
Our experiments show that \sys can quickly find the dependents of 
spreadsheet cells to significantly reduce the time of returning control to users 
while achieving fast graph maintenance at the same time. 

\begin{NoHyper}
\scriptsize
\bibliographystyle{IEEEtran}
\bibliography{references}
\end{NoHyper}

\end{document}